\journal{Neurocomputing}
\newtheorem{definition}{Definition}
\newtheorem{assumption}{Assumption}
\newtheorem{proposition}{Proposition}
\newtheorem{theorem}{Theorem}
\newtheorem{lemma}{Lemma}
\newtheorem{remark}{Remark}
\def\conv{\operatorname{conv}} 
\def\dist{\operatorname{dist}}
\def\prj{\operatorname{prj}}
\def\uv{\operatorname{uv}}
\def\sgn{\operatorname{sgn}}
\def\s{\operatorname{s}}
\def\t{\operatorname{t}}
\def\c{\operatorname{c}}
\def\z{\operatorname{z}}
\def\T{\operatorname{T}}
\def\phii{\bm{\varphi}_i}
\def\y{\bm{y}}
\def\x{\bm{x}}
\def\u{\bm{u}}
\def\z{\bm{z}}
\newcommand{\DOI}[1]{doi: \href{https://doi.org/#1}{#1}}
\begin{document}

\begin{frontmatter}

\title{Achieving safe minimum circle circumnavigation around multiple targets: a dynamic compensation approach}

\author[UESTC]{Chao Wang}
\author[UESTC]{Yingjing Shi\corref{correspondingauthor}}
  \ead{shiyingjing@uestc.edu.cn}
\author[UESTC]{Rui Li}
\author[Song1,Song2]{Yongduan Song}

\cortext[correspondingauthor]{Corresponding author}
\address[UESTC]{School of Automation Engineering, University of Electronic Science and Technology of China, Chengdu 611731, PR China}
\address[Song1]{Key Laboratory of Dependable Service Computing in Cyber Physical Society of Ministry of Education, Chongqing University, Chongqing 400044, PR China}
\address[Song2]{School of Automation, Chongqing University, Chongqing 400044, PR China}

\begin{abstract}
  Minimum circle circumnavigation is proposed in this paper, which is of special value in target monitoring, capturing and/or attacking. In this paper, a safe minimum circle circumnavigation of multiple targets based on bearing measurements is studied. In contrast with the traditional circumnavigation problem, with the new pattern, one agent is able to enclose multiple targets along a minimum circle with the desired enclosing distance and tangential speed. To achieve the minimum circle circumnavigation, an algorithm including dynamic compensators and a control protocol is proposed, by which collision is avoided between the agent and the multiple targets during the whole moving process. Moreover, the control protocol developed for a single agent is further extended to the scenarios of multiple agents by adding a coordination mechanism into the tangential velocity term, which drives the agents to distribute evenly on each expected circular orbit with the same radius or different radius. Simulations results illustrate the effectiveness of the proposed methods.
\end{abstract}

\begin{keyword}
Localization \sep
Safe circumnavigation \sep
Minimum circle \sep
Multi-agent system \sep
Dynamic compensation
\end{keyword}
\end{frontmatter}

\section{Introduction}

Due to the wide applications in applications in satellite clusters, monitoring, rescue, etc, considerable effort has been made to study the coordination control of the multi-agent system including consensus, formation, containment control and so on \cite{1,2,3,4}. In the area of cooperative control of multi-agent systems, circumnavigation is an important problem, which is to drive an agent/agents to enclose some targets or an area of interest.
A classic method dealing with such problem is that the agent first localize the targets, and then encloses the targets with the desired
radius and tangential speed.
Previous literature taking this classic approach includes, but is not limited to, \cite{5,6,7,8}.

According to the availability of target information, circumnavigation problems can be divided into full information-based and incomplete information-based circumnavigation. So far, full information-based circumnavigation problem has been widely investigated in literature, e.g., \cite{5,6,7,8,9,10}, where
the agent can directly obtain the position information of the target. However, such method is not applicable in some scenes where the target's position cannot be obtained directly. Therefore, the incomplete information-based circumnavigation problem has attracted increasing attention, see \cite{11,12,13,14,15,16,17,18,19,20,21,22,23,24,25}. Compared with the distance-only measurement adopted in \cite{11,12,13,14,15,16}, the bearing-only measurement is a passive measurement, which is favourable in some special scenes such as radio silence and has been investigated in literature, see e.g.,  \cite{17,18,19,20,21,22,23,24,25}.

In the early studies on circumnavigations with only bearing information, \cite{17} is a pioneering work, where an algorithm is proposed enabling an agent to enclose a target with the desired radius and tangential velocity.
The target's position is directly estimated by constructing the errors of the target position through orthogonal projection and the errors converge to zero with the movement of the agent. In \cite{18}, this algorithm is further extended for a multi-target scene, where multiple agents enclose multiple targets with the desired radius.

In recent studies, other key issues are considered while solving the localization and circumnavigation problem.
To closely detect or monitor the targets, one important problem is concerned with
 how to realize a circumnavigation pattern that enables an agent to enclose the targets closely.
In addition, how to design an algorithm to avoid collision with the targets is also worth investigating when enclosing unknown targets with only bearing measurements.

There have been some research efforts to investigate such localization and circumnavigation problems. Different from the previous circular circumnavigation, the authors of \cite{19} propose an elliptical circumnavigation pattern. Since  elliptical circumnavigation is closer than the traditional circle circumnavigation, it shows a better performance in harsh environments.
But collision avoidance with the targets is not considered and the distribution range of targets need to be known in advance. Although \cite{20} proposes a safe circle circumnavigation pattern, where the distance between the agent and the targets is larger than a safe distance, the pattern is not close enough and the distribution of initial positions of the estimates of the targets is strictly limited as well. Subsequently, a new scalar estimator is designed in \cite{21}, which provides a collision avoidance mechanism with a loose initial condition.
However, \cite{21} does not consider multi-target scenarios.
Inspired by \cite{19,21}, a safe convex-circumnavigation pattern is proposed in \cite{22} for multiple targets. The pattern realizes the closer circumnavigation than elliptical circumnavigation under loose initial conditions. Moreover, there is no collision between the agent and the convex hull of multiple targets.
However, the convex-circumnavigation pattern is not good enough in engineering practice to design control law. The reason is that the curvature of the orbit in the convex hull vertex region is large, thus it is difficult to control the agent to turn in this region. It should be noted that the control law is different from the control protocol. The speed of the agent is regulated with the control protocol and the force acting on the agent is regulated with the control law. With this consideration, a smooth circumnavigation pattern is expected because smooth orbitals require simpler accelerations than complex orbitals.

Inspired by the above analysis, in this paper, we consider the localization and safe minimum circle circumnavigation problem for multiple targets using bearing-only measurements,
where the agent can safely circumnavigate a minimum radius circle  covering all targets with the desired enclosing distance and tangential velocity.
To deal with the incomplete measurement information, dynamic compensators are first designed.
Based on the dynamic compensation approach proposed in this paper, a control protocol is designed to make the agent achieve the minimum circle circumnavigation.
To make the agent circumnavigate the targets closely enough, different from the control protocol proposed in \cite{18,19}, where the velocity of the agent is decomposed into the direction of the geometric center of the targets, we develop a new control protocol with the agent velocity pointing to the estimated minimum circle center and its orthogonal direction.
In addition, by the developed method, there is no collision between the agent and the multiple targets during the whole moving process.
Furthermore, the proposed method can be easily extended for the multiple agents scenarios, where an evenly spaced configuration along the desired minimum circle orbit is achieved.

The main contributions of this work are threefold. First, a minimum circle circumnavigation pattern is proposed. With this pattern, the agent can surround the targets more closely than the traditional circular circumnavigation, which can help the agent to detect the targets more thoroughly.
Second, a dynamic compensation based anti-collision mechanism is established without knowing the distribution range of the targets, by which the minimum circle circumnavigation is achieved, and no collision happens under the loose initial constrained conditions.
Third, compared with the convex-circumnavigation, under the minimum circle circumnavigation pattern, it is easier to design the control law and is more convenient to adjust the proposed algorithm for the multi-agent case.

The rest of this paper is structured as follows. In Section~\ref{section:prelimaries}, we introduce some mathematical preliminaries and formally propose the problem of localization and safe minimum circle circumnavigation of
multiple targets. In Section~\ref{section:algorithm}, we propose an algorithm to solve this problem and prove the convergence of the algorithm.
In Section~\ref{section:extension}, we extend the proposed algorithm to multiple targets scenarios.
Simulation results are shown in Section~\ref{section:simulation} and we demonstrate the conclusions and future works in Section~\ref{section:conclusion}.

\section{Preliminaries and problem statement}\label{section:prelimaries}

In this section, we first define some notations used throughout the paper, and then briefly review some closely related theories.

\subsection{Mathematical preliminaries and notations}

Suppose there are $n$ stationary targets with unknown positions and an agent moving in a 2D space. Both the targets and the agent are assumed to be modeled as points. Among them, the targets are represented by $\mu_{1}$, $\mu_{2}$,..., $\mu_{n}$.

In the beginning, some definitions of mathematical symbols are introduced.
\par

\begin{definition}\label{definition:1}
Several concepts need to be defined:
\begin{basedescript}{\desclabelstyle{\pushlabel}\desclabelwidth{6em}} 
  \setlength{\labelsep}{-0.5em} 

\item[$\boldsymbol{y}(t) \in \mathbb{R}^{2}$] the position of the agent

\item[$\boldsymbol{x}_{i} \in \mathbb{R}^{2}$] the position of the stationary target $\mu_{i}$

\item[$X \subset \mathbb{R}^{2}$] the set of positions of all of the targets, $X \triangleq \left\{\bm{x}_{i}\right\}_{i=1}^{n}$

\item[$\rho_{i}(t) \in \mathbb{R}_{\ge 0}$] the distance between the agent and the target $\mu_{i}$, $\rho_{i}(t) \triangleq \| \boldsymbol{x}_{i}-\boldsymbol{y}(t) \|$

\item[$\bm\varphi_{i}(t) \in \mathbb{U}$] the unit vector from the agent to the target $\mu_{i}$, $\bm{\varphi}_{i}(t) \triangleq [ \boldsymbol{x}_i(t) - \boldsymbol{y}(t) ] / \rho_i(t)$ $($if $\rho_{i}(t) > 0)$

\item[$\Phi(t) \subset \mathbb{U}$] the set of unit vectors from the agent to the targets at time $t$, $\Phi(t) \triangleq\left\{\boldsymbol{\varphi}_{i}(t) \right\}_{i=1}^{n}$

\item[$\bar{\boldsymbol{\varphi}}_{i}(t) \in \mathbb{U} $] $\boldsymbol{\varphi}_{i}(t)$ rotates $\pi / 2$ clockwise

\item[$\hat{\rho}_{i}(t) \in \mathbb{R} $] the distance between the agent and the estimated position of the target $\mu_{i}$ along the direction of $\bm{\varphi}_{i}(t)$

\item[$\tilde{\rho}_{i}(t) \in \mathbb{R} $] the estimation error of the position of the target $\mu_{i}$, $\tilde{\rho}_{i}(t) \triangleq \hat{\rho}_{i}(t)-\rho_{i}(t)$

\item[$\hat{\x}_{i}(t) \in \mathbb{R}^{2}$]  the estimated position of the target $\mu_{i}$, $\hat{\boldsymbol{x}}_{i}(t) \triangleq \boldsymbol{y}(t)+\hat{\rho}_{i}(t) \boldsymbol{\varphi}_{i}(t)$

\item[$\hat{X}(t) \subset \mathbb{R}^{2}$] the set of estimated positions of all targets, $\hat{X}(t) \triangleq \left\{\hat{\bm{x}}_{i}(t)\right\}_{i=1}^{n}$

\item[$\hat{\bm{c}}(t) \in \mathbb{R}^{2}$] the center coordinates of the minimum circle of the estimated positions of $n$ targets

\item[$\bm{c} \in \mathbb{R}^{2}$] the center coordinates of the minimum circle of the positions of $n$ targets

\item[$\hat{\rho}(t) \in \mathbb{R}_{\geq0}$] the distance between the center of the minimum circle of the estimated targets and the agent, $\hat{\rho}(t) \triangleq \|\hat{\boldsymbol{c}}(t) - \boldsymbol{y}(t)\|$

\item[$\rho(t) \in \mathbb{R}_{\geq0}$] the distance between the center of the minimum circle of the targets and the agent, $\rho(t) \triangleq \|\boldsymbol{c}(t) - \boldsymbol{y}(t)\|$

\item[$\hat{\boldsymbol{\varphi}}(t) \in \mathbb{U}$] the unit vector from the agent to the center of the minimum circle of the estimated positions of n targets, $\hat{\boldsymbol{\varphi}}(t)  \triangleq \left[\hat{\boldsymbol{c}}(t)-\boldsymbol{y}(t)\right] / \hat{\rho}(t)$ $($if $\hat{\rho}(t) > 0)$

\item[$\hat{\bar{\boldsymbol{\varphi}}}(t) \in \mathbb{U} $] $\hat{\boldsymbol{\varphi}}(t)$ rotates $\pi / 2$ clockwise

\item[$\boldsymbol{\varphi}(t) \in \mathbb{U}$] the unit vector from the agent to the center of the minimum circle of the positions of n targets, $\boldsymbol{\varphi}(t)  \triangleq \left[\boldsymbol{c}(t)-\boldsymbol{y}(t)\right] / \rho(t)$ $($if $\rho(t) > 0)$

\item[$\bar{\boldsymbol{\varphi}}(t) \in \mathbb{U} $] $\boldsymbol{\varphi}(t)$ rotates $\pi / 2$ clockwise

\item[$\alpha \in \mathbb{R}_{+}$] the desired tangential enclosing speed

\item[$r_{\s} \in \mathbb{R}_{+}$] the safety distance

\item[$\hat{r}_{\t}(t) \in \mathbb{R}_{\geq0}$] the radius of minimum circle of the estimated positions of $n$ targets

\item[${r_{\t}} \in \mathbb{R}_{+}$] the radius of minimum circle of the real positions of $n$ targets

\item[${d} \in \mathbb{R}_{+}$] the desired distance around the minimum circle

\item[$\hat{r}(t) \in \mathbb{R}_{+}$] the estimated radius of the desired orbit, $\hat{r}(t) \triangleq \hat{r}_{\t}(t)+d$

\item[$r \in \mathbb{R}_{+}$] the real radius of the desired orbit, $r \triangleq r_{\t}+d$

\end{basedescript}
\end{definition}
\noindent Here, $\mathbb{R}$ is the set of real numbers.
$\mathbb{R}_{\ge 0}$ is the set of non-negative real numbers.
$\mathbb{R}_+$ is the set of positive real numbers.
$\mathbb{R}^2$ is the set of 2D column vectors.
$\| \cdot \|$ is the Euclidian norm.
$\mathbb{U}$ is the set of unit vectors, $\mathbb{U} \triangleq \{ \boldsymbol{u} \in \mathbb{R}^2 \mid \| \boldsymbol{u} \| = 1 \}$.

\begin{remark}
  Some points need to be explained.
  \begin{enumerate}[(1)]
  \item $\hat{\rho}_{i}(t)$ is a variable that can be positive or negative.
  \item The minimum circle of a plane point set is unique, thus $\hat{r}_{\t}(t)$, $\hat{\bm{c}}(t)$ are unique at a certain time.
  \end{enumerate}
\end{remark}

In this paper, the concept of the convex hull is used to analyze the security of the agent motion, for which the related definitions are introduced.

\begin{definition}[Convex Hull of Finite Points Set]\label{definition:convexhull}\textsuperscript{\cite{22}}
  Suppose $P = \left\{\boldsymbol{p}_{i}\right\}_{i=1}^{m}$ is a subset of $\mathbb{R}^2$.
  We define $\conv(P)$ as the convex hull of $P$, such that
  \[ \conv(P) \triangleq \left\{ \sum_{i=1}^{m} k_{i} \boldsymbol{p}_{i} \mid k_{i} \ge 0,i=1,2,\dots,m, \sum_{i=1}^{m} k_{i}=1 \right\} .\]
\end{definition}

\begin{definition}[Relationship Between a Point and a Convex Set]\label{definition:relationship}\textsuperscript{\cite{22}}
  Suppose $Q \in \mathbb{R}^{2}$ is a set and $\boldsymbol{a} \in \mathbb{R}^{2}$ is a point. Define $\dist(\boldsymbol{a},Q)$ as the distance from $\boldsymbol{a}$ to Q, that is,
  $$\dist(\boldsymbol{a},Q) \triangleq \inf_{\boldsymbol{q} \in Q} \|\boldsymbol{a}-\boldsymbol{q}\|.$$
  If Q is compact and convex, then we can define $\prj(\boldsymbol{a},Q)$ as the minimum distance point in Q to $\boldsymbol{a}$, which satisfies that
  $$\|\prj(\boldsymbol{a},Q) - \boldsymbol{a}\| = \dist(\boldsymbol{a},Q).$$
  If $\dist(\boldsymbol{a},Q)>0,$ define $\uv(\boldsymbol{a},Q)$ as the unit vector pointing from $\boldsymbol{a}$ to Q, i.e.,
  $$\uv(\boldsymbol{a},Q) \triangleq \frac{\prj(\boldsymbol{a},Q)- \boldsymbol{a}}{\dist(\boldsymbol{a},Q)}.$$
\end{definition}
\begin{remark}
  In Definition~\ref{definition:relationship}, it should be noted that $\prj(\boldsymbol{a},Q)$ exists and  unique.
\end{remark}

With the concepts defined above, we give some definitions of the convex hull related to safe minimum circle circumnavigation.

\begin{definition}Define:
\begin{basedescript}{\desclabelstyle{\pushlabel}\desclabelwidth{6em}} 
  \setlength{\labelsep}{-0.5em} 
  \item[$X_{\c}\in\mathbb{R}^{2}$] the convex hull of the targets, $X_{\c}\triangleq\conv(X)$

  \item[$\hat{X}_{\c}(t)\in\mathbb{R}^{2}$] the convex hull of the estimated targets, $\hat{X}_{\c}(t)\triangleq\conv(\hat{X}(t))$

  \item[$D(t)\in\mathbb{R}_{\geq0}$] the distance from the agent to the convex hull of all of the targets, $D(t)\triangleq\dist(\y(t),X_{\c})$

  \item[$\hat{D}(t)\in\mathbb{R}_{\geq0}$] the distance from the agent to the convex hull of all of the targets, $\hat{D}(t)\triangleq\dist(\y(t),\hat{X}_{\c})$

  \item[$\bm{\psi}(t)\in\mathbb{R}^{2}$] the unit vector pointing from the agent to the convex hull of all of the targets, $\bm{\psi}(t)\triangleq\uv(\y(t),X_{\c})$ $($if $D(t)>0)$

  \item[$\bar{\bm{\psi}}(t)\in\mathbb{R}^{2}$] $\bm{\psi}(t)$ rotates $\pi/2$ clockwise

  \item[$\hat{\bm{\psi}}(t)\in\mathbb{R}^{2}$] the unit vector pointing from the agent to the convex hull of all of the targets, $\bm{\psi}(t)\triangleq\uv(\y(t),\hat{X}_{\c})$ $($if $\hat{D}(t)>0)$

  \item[$\hat{\bar{\bm{\psi}}}(t)\in\mathbb{R}^{2}$] $\hat{\bm{\psi}}(t)$ rotates $\pi/2$ clockwise

\end{basedescript}
\end{definition}

The relationships among some notations defined above are illustrated in  \figurename~\ref{figure:Relationship among variables.} and \figurename~\ref{figure:Minimum circle variables description.}.

\begin{figure}[!htbp]
  \centering
 \begin{minipage}[]{0.596\linewidth}
  \includegraphics[width=1\linewidth]{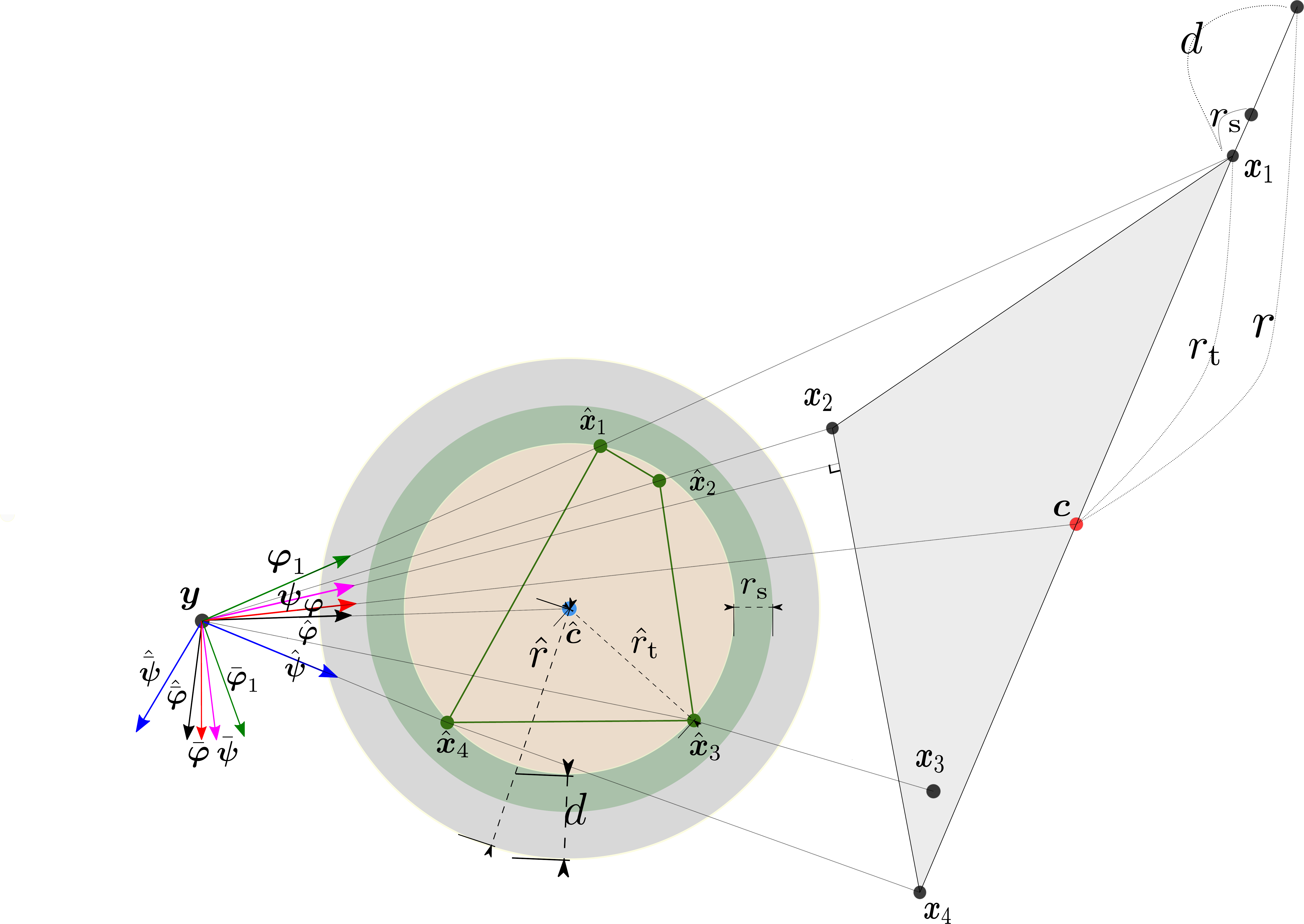}
   \caption{Relation among variables ($|X|=4$).}
   \label{figure:Relationship among variables.}
 \end{minipage}
 \hfill
 \centering
 \begin{minipage}[]{0.385\linewidth}
  \includegraphics[width=1\linewidth]{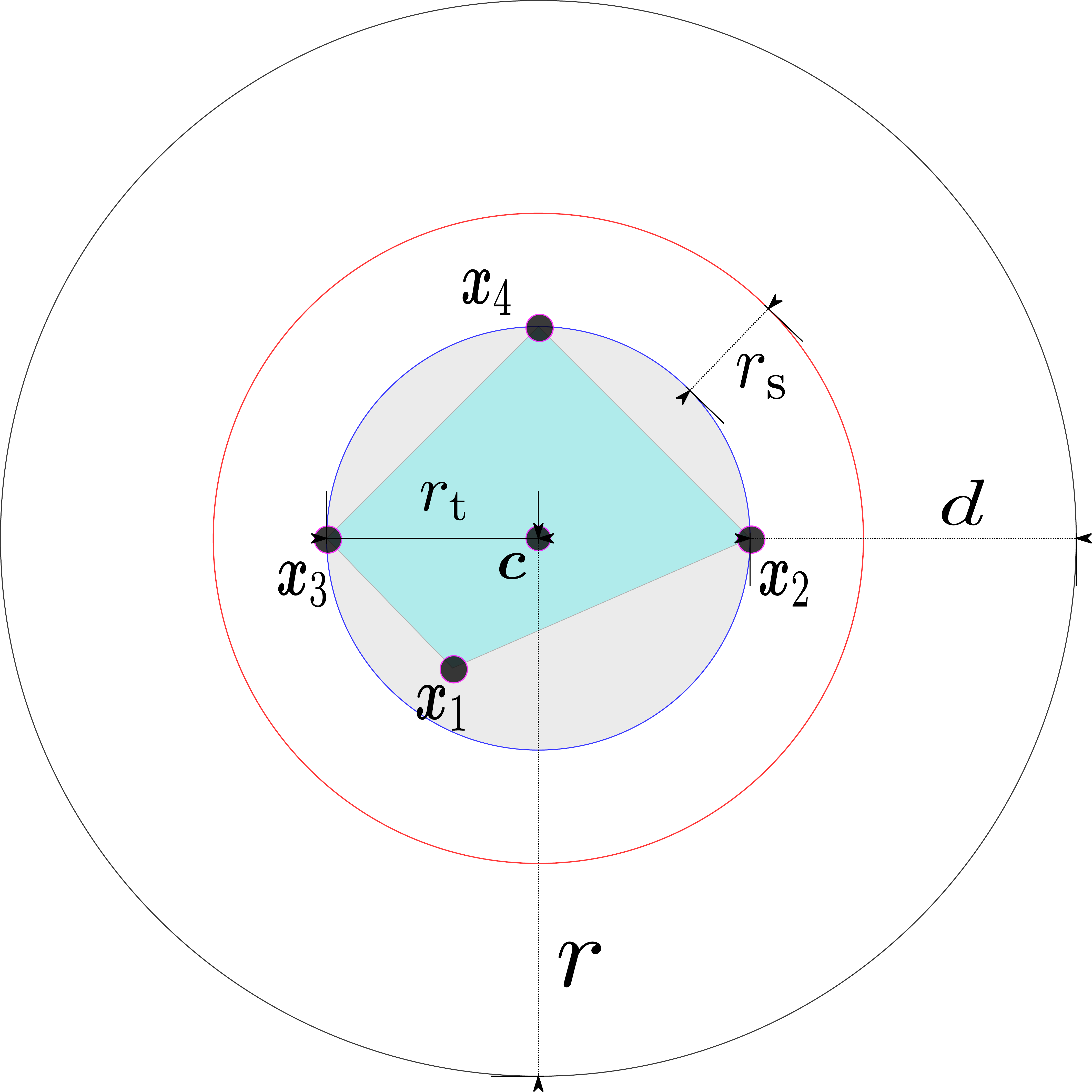}
  \caption{Minimum circle variables description.}
  \label{figure:Minimum circle variables description.}
 \end{minipage}
 \end{figure}

Then, some angle descriptions and comparison criteria of angle size are defined in advance for the convenience of the following description.
First, $\langle\bm{u},\bm{v}\rangle\in \mathbb{R}_{\geq0}$ is defined as the angle between $\bm{v}$ and $\bm{u}$. Then, suppose $V \triangleq \left\{\bm{v}_{i}\right\}_{i=1}^{m}$ is a subset of $\mathbb{U}.$
 $\bm{v}_r$ is defined as the rightmost unit vector of $V$ denoted by ruv(V), and $\bm{v}_l$ is defined as the leftmost unit vector of $V$ denoted by luv(V). Finally, we define the standard of angle size comparison. Note that the angle of $\bm{v}_i$ is defined as the angle when $\bm{v}_r$ is taken as the starting vector and rotated counterclockwise to $\bm{v}_i$.\par

Furthermore, we summarize some important results that will be used to analyze the convergence of dynamic compensators and control protocol.

\begin{proposition}\label{proposition:2}\textsuperscript{\cite{22}}
  Suppose $P \triangleq \left\{\bm{p}_{i}\right\}_{i=1}^{m}$ is a subset of $\mathbb{R}^{2}$. Let $P_{\c}\triangleq \conv(P)$, $\bm{a} \in  \mathbb{R}^{2}\setminus P_{\c}$, $\bm{v}_{i} \triangleq (\bm{p}_{i}-\bm{a})/\|\bm{p}_{i}-\bm{a}\|$ $(i=1,2,...,m)$, $V \triangleq \left\{\bm{v}_{i}\right\}_{i=1}^{m}$, $\bm{u} \triangleq \uv(\bm{a},P_{\c})$. Then we can assert that $\langle\bm{u},\bm{v}_{i}\rangle<\pi/2,$ for $i=1,2,...,m$.
\end{proposition}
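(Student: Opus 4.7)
The plan is to derive the claim directly from the variational (first-order) characterisation of the metric projection onto a nonempty closed convex set in a Hilbert space, specialised to $\mathbb{R}^2$. Let $\bm{q}^{*} \triangleq \prj(\bm{a}, P_{\c})$; this point exists and is unique by Definition~\ref{definition:relationship} because $P_{\c}$ is a compact convex subset of $\mathbb{R}^2$. Since $\bm{a} \notin P_{\c}$, the distance $D \triangleq \dist(\bm{a}, P_{\c}) = \|\bm{q}^{*} - \bm{a}\|$ is strictly positive, and by definition $\bm{u} = (\bm{q}^{*} - \bm{a})/D$.

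First I would write down the projection inequality: for every $\bm{q} \in P_{\c}$,
\[
(\bm{q} - \bm{q}^{*}) \cdot (\bm{a} - \bm{q}^{*}) \le 0,
\]
which is the standard necessary and sufficient optimality condition for $\bm{q}^{*}$ to minimise $\|\bm{a} - \bm{q}\|^{2}$ over the convex set $P_{\c}$ (it follows by differentiating along the segment from $\bm{q}^{*}$ to $\bm{q}$ and using convexity). Rewriting in terms of $\bm{u}$ gives $(\bm{q} - \bm{q}^{*}) \cdot \bm{u} \ge 0$, hence
\[
(\bm{q} - \bm{a}) \cdot \bm{u} \;\ge\; (\bm{q}^{*} - \bm{a}) \cdot \bm{u} \;=\; D \;>\; 0
\]
for every $\bm{q} \in P_{\c}$.

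Next I would specialise $\bm{q} = \bm{p}_{i}$, which lies in $P_{\c}$ since each vertex of the convex hull belongs to the hull itself. Dividing by $\|\bm{p}_{i} - \bm{a}\| > 0$ (nonzero because $\bm{a} \notin P_{\c}$ and $\bm{p}_{i} \in P_{\c}$) yields
\[
\cos \langle \bm{u}, \bm{v}_{i} \rangle \;=\; \bm{u} \cdot \bm{v}_{i} \;\ge\; \frac{D}{\|\bm{p}_{i} - \bm{a}\|} \;>\; 0,
\]
so $\langle \bm{u}, \bm{v}_{i} \rangle < \pi/2$ for every $i=1,2,\dots,m$, which is the desired conclusion.

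There is no real obstacle here: the whole proof rests on the projection inequality, which geometrically just says that the hyperplane through $\bm{q}^{*}$ perpendicular to $\bm{u}$ is a supporting hyperplane of $P_{\c}$ separating it from $\bm{a}$. The only detail worth flagging is the strictness of the final inequality, which is secured by the hypothesis $\bm{a} \notin P_{\c}$ (giving $D > 0$) together with the fact that $\bm{p}_{i} \ne \bm{a}$; without the former, one would only obtain a non-strict bound $\langle \bm{u}, \bm{v}_{i} \rangle \le \pi/2$.
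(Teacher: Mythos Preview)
Your argument is correct: the variational inequality for the metric projection onto a closed convex set gives $(\bm{q}-\bm{q}^{*})\cdot(\bm{a}-\bm{q}^{*})\le 0$ for all $\bm{q}\in P_{\c}$, from which $(\bm{p}_{i}-\bm{a})\cdot\bm{u}\ge D>0$ and hence $\langle\bm{u},\bm{v}_{i}\rangle<\pi/2$ follow exactly as you wrote. Note, however, that the paper does not supply its own proof of this proposition; it simply quotes the result from reference~\cite{22}, so there is no in-paper argument to compare your approach against.
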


\begin{proposition}\label{proposition:3}\textsuperscript{\cite{22}}
  Suppose $V \triangleq \left\{\bm{v}_{i}\right\}_{i=1}^{m}$ is a subset of $\mathbb{U}$. Let $P \triangleq \left\{k_{i}\bm{v}_{i}\right\}_{i=1}^{m}$, $Q \triangleq \left\{l_{i}\bm{v}_{i}\right\}_{i=1}^{m}$ be two sets where $k_{i}, l_{i} \in\mathbb{R}$, $0<k_{i}\leq l_{i}$ for i=1,2,...,m. Define $P_{\c} \triangleq \conv(P)$, $Q_{c} \triangleq \conv(Q)$, $L_{P} \triangleq \dist(\bm{0},P_{\c})$ and $L_{Q} \triangleq \dist(\bm{0},Q_{c})$. Then we can assert that
  $L_{P}\leq L_{Q}.$
\end{proposition}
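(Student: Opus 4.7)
The plan is to reduce to the nontrivial case $\bm{0}\notin P_{\c}$ (otherwise $L_P=0$ and the conclusion is immediate) and then exploit the supporting hyperplane at the projection point $\bm{p}^{*}\triangleq\prj(\bm{0},P_{\c})$. By the standard variational characterization of projection onto a closed convex set, $\langle\bm{x}-\bm{p}^{*},\,\bm{0}-\bm{p}^{*}\rangle\le 0$ for every $\bm{x}\in P_{\c}$, which rearranges to $\langle\bm{x},\bm{p}^{*}\rangle\ge\|\bm{p}^{*}\|^{2}=L_P^{2}$. Because a linear functional attains its minimum over a convex hull at a generator, it suffices to apply this at each vertex, giving $k_i\langle\bm{v}_i,\bm{p}^{*}\rangle\ge L_P^{2}$ for $i=1,2,\dots,m$.

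Next I would upgrade this vertex-wise inequality from the generators of $P_{\c}$ to those of $Q_{\c}$. Since $L_P>0$ and $k_i>0$, the inequality forces $\langle\bm{v}_i,\bm{p}^{*}\rangle\ge L_P^{2}/k_i>0$, so scaling by $l_i\ge k_i>0$ preserves the sign and yields
\[
\langle l_i\bm{v}_i,\bm{p}^{*}\rangle\;=\;\frac{l_i}{k_i}\,k_i\langle\bm{v}_i,\bm{p}^{*}\rangle\;\ge\;\frac{l_i}{k_i}L_P^{2}\;\ge\;L_P^{2}
\]
for every $i$. Taking an arbitrary convex combination $\bm{q}=\sum_{i=1}^{m}\beta_i(l_i\bm{v}_i)\in Q_{\c}$, linearity gives $\langle\bm{q},\bm{p}^{*}\rangle\ge L_P^{2}$, and then Cauchy--Schwarz provides $\|\bm{q}\|\,\|\bm{p}^{*}\|\ge L_P^{2}=\|\bm{p}^{*}\|^{2}$, i.e.\ $\|\bm{q}\|\ge L_P$. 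Minimizing the left side over $\bm{q}\in Q_{\c}$ produces $L_Q\ge L_P$.

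I do not expect a serious obstacle. The one delicate point is ensuring strict positivity of $\langle\bm{v}_i,\bm{p}^{*}\rangle$ so that replacing $k_i$ by the larger $l_i$ preserves (rather than reverses) the inequality; this is exactly why the case $\bm{0}\in P_{\c}$ must be split off first, after which positivity is automatic from $L_P>0$ and $k_i>0$. The rest is a clean application of the projection-hyperplane inequality followed by Cauchy--Schwarz.
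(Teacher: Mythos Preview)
Your argument is correct. The separation of the trivial case $\bm{0}\in P_{\c}$, the supporting-hyperplane inequality $\bm{x}^{\mathrm{T}}\bm{p}^{*}\ge\|\bm{p}^{*}\|^{2}$ at the projection point, the sign observation $\bm{v}_i^{\mathrm{T}}\bm{p}^{*}>0$ that lets you replace $k_i$ by the larger $l_i$, and the closing Cauchy--Schwarz step are all sound and fit together cleanly.

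There is nothing to compare against here: in the paper this proposition is not proved but quoted from reference~[22] (note the superscript citation attached to the proposition heading), so the paper contains no argument of its own. Your self-contained proof is therefore a genuine addition rather than a reproduction.

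One purely notational caution if you intend to splice your text into this manuscript: the paper reserves the symbol $\langle\bm{u},\bm{v}\rangle$ for the \emph{angle} between $\bm{u}$ and $\bm{v}$ (see the paragraph following Definition~4), not for the inner product. To avoid a clash you should write the pairings as $\bm{x}^{\mathrm{T}}\bm{p}^{*}$, $\bm{v}_i^{\mathrm{T}}\bm{p}^{*}$, etc., consistent with the rest of the paper.
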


\begin{proposition}\label{proposition:4}\textsuperscript{\cite{26}}
  Suppose $C \subset \mathbb{R}^{2}$ is a convex hull of a finite point set. Define a function
  \begin{center}
   $f: \mathbb{R}^{2}\backslash X_{c}\rightarrow \mathbb{R}_{+},$\par
   $\qquad \qquad \qquad \bm{a}\mapsto \dist(\bm{a},X_{\c}).$\par
  \end{center}
  Let $\bm{l} \triangleq \uv(\bm{a},C)$. Then we can assert that $f$ is differentiable and
  $\nabla f(\bm{a})=-\bm{l}.$
\end{proposition}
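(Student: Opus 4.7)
The plan is to prove that the squared distance function $g \triangleq f^{2}$ is differentiable on $\mathbb{R}^{2}\setminus C$ with $\nabla g(\bm{a}) = 2(\bm{a} - \prj(\bm{a},C))$, and then deduce the claim for $f$ by the chain rule. Since $C$ is the convex hull of a finite point set, it is compact and convex, so $\prj(\bm{a},C)$ exists and is unique by the Remark following Definition~\ref{definition:relationship}.

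The key tool is the variational characterization of the projection. Writing $\bm{p} \triangleq \prj(\bm{a},C)$, convexity means $(1-t)\bm{p}+t\bm{c}\in C$ for every $\bm{c}\in C$ and $t\in[0,1]$, and minimality of $\|\bm{a}-(\cdot)\|^{2}$ at $t=0$ gives
\[
\langle \bm{a}-\bm{p},\, \bm{c}-\bm{p}\rangle \le 0 \quad \text{for all } \bm{c}\in C.
\]
A standard consequence of this inequality is the $1$-Lipschitz (non-expansive) property of the projection: $\|\prj(\bm{a}+\bm{h},C)-\prj(\bm{a},C)\|\le \|\bm{h}\|$.

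Now fix $\bm{h}$ small enough that $\bm{a}+\bm{h}\notin C$, and set $\bm{p}'\triangleq\prj(\bm{a}+\bm{h},C)$. Using $\bm{p}\in C$ as a feasible test point for $g(\bm{a}+\bm{h})$ yields the upper bound
\[
g(\bm{a}+\bm{h}) \le \|\bm{a}+\bm{h}-\bm{p}\|^{2} = g(\bm{a}) + 2\langle \bm{a}-\bm{p},\bm{h}\rangle + \|\bm{h}\|^{2},
\]
and using $\bm{p}'\in C$ as a feasible test point for $g(\bm{a})$ yields, after expansion,
\[
g(\bm{a}+\bm{h}) - g(\bm{a}) \ge 2\langle \bm{a}-\bm{p},\bm{h}\rangle - 2\langle \bm{p}'-\bm{p},\bm{h}\rangle + \|\bm{h}\|^{2}.
\]
By the $1$-Lipschitz property and Cauchy–Schwarz, $|\langle \bm{p}'-\bm{p},\bm{h}\rangle|\le\|\bm{h}\|^{2}$, so both bounds squeeze to $g(\bm{a}+\bm{h}) - g(\bm{a}) = 2\langle \bm{a}-\bm{p},\bm{h}\rangle + O(\|\bm{h}\|^{2})$, establishing that $g$ is differentiable with $\nabla g(\bm{a})=2(\bm{a}-\bm{p})$.

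Finally, since $f(\bm{a})>0$ on $\mathbb{R}^{2}\setminus C$, the chain rule gives
\[
\nabla f(\bm{a}) = \frac{\nabla g(\bm{a})}{2f(\bm{a})} = \frac{\bm{a}-\prj(\bm{a},C)}{\dist(\bm{a},C)} = -\uv(\bm{a},C) = -\bm{l},
\]
by Definition~\ref{definition:relationship}. The only real obstacle is the lower bound in the squeeze: one must absorb the offset induced by $\bm{p}'\ne\bm{p}$ into an $O(\|\bm{h}\|^{2})$ error, which is precisely where non-expansiveness of the projection is essential. The upper bound and the final chain-rule step are routine.
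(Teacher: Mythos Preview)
Your proof is correct. The squared-distance argument is carried out cleanly: the upper bound from testing with $\bm{p}$, the lower bound from testing with $\bm{p}'$, and the absorption of the $\langle \bm{p}'-\bm{p},\bm{h}\rangle$ term via non-expansiveness of the metric projection all go through exactly as you state, and the final chain-rule step is immediate since $f>0$ off $C$.

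As for comparison with the paper: the paper does not actually prove Proposition~\ref{proposition:4}. It is stated with a citation to Hiriart-Urruty and Lemar\'echal, \emph{Convex Analysis and Minimization Algorithms~I}, and used later as a black box (in Lemma~\ref{lemma:4} and Theorem~\ref{theorem:4}) to differentiate $D(t)=\dist(\y(t),X_{\c})$ along the trajectory. Your argument is in fact the standard route taken in that reference: show the squared distance is $C^1$ with gradient $2(\bm{a}-\prj(\bm{a},C))$ and then pass to $f=\sqrt{g}$. So you have supplied a complete, self-contained proof where the paper only quotes the result; nothing further is needed.
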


\begin{proposition}\label{proposition:5}\textsuperscript{\cite{27}}
  In a given point set, there are two or three points can always be found to determine the minimum covering circle of the point set. If there are three points on the minimum circle, then the minimum circle is the circumscribed circle of the three points; If there are only two points on the minimum circle, then the minimum circle is the circle with the diameter of the straight line segment between the two points.
\end{proposition}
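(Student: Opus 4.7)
The plan is to establish this classical result (the existence of a 2-point or 3-point support for the minimum enclosing circle) by a variational/contradiction argument: assume a minimum enclosing circle exists and is unique, then show that any candidate minimizer which does not fit one of the two stated descriptions can be strictly improved. I would proceed in three stages.

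First, I would dispose of existence and uniqueness. Existence follows by continuity: the map $(\bm{c}, r) \mapsto r$ over the closed feasible set $\{(\bm{c}, r) : \|\bm{x}_i - \bm{c}\| \le r,\ i=1,\dots,n\}$, intersected with a large compact box, attains its minimum. Uniqueness is the more delicate piece: if two distinct disks of the same minimum radius $r^{*}$ both enclosed the point set $X$, then $X$ would sit inside their lens-shaped intersection, and the midpoint of the two centers would yield a strictly smaller enclosing disk. This is the step I expect to be the main obstacle, because the geometric bookkeeping for the ``strictly smaller'' claim must be done carefully to avoid circularity with the proposition being proved.

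Second, assuming uniqueness, I would bound from below the number of points of $X$ lying on the boundary. If no point of $X$ is on the boundary, uniformly shrinking the radius still contains $X$, contradicting minimality. If exactly one boundary point $\bm{p}$ exists, translating the center slightly in the direction $\bm{c} - \bm{p}$ strictly decreases $\|\bm{c} - \bm{p}\|$ while every other point of $X$ remains strictly interior, so the radius can be reduced, again a contradiction.

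Third, I would classify the minimizer by the number of boundary points. If exactly two boundary points $\bm{p}_1, \bm{p}_2$ exist and are not antipodal, then $\bm{c}$ lies off the segment $\overline{\bm{p}_1 \bm{p}_2}$; moving $\bm{c}$ along the perpendicular bisector of $\overline{\bm{p}_1 \bm{p}_2}$ toward that segment decreases $\|\bm{c} - \bm{p}_1\| = \|\bm{c} - \bm{p}_2\|$ simultaneously, while the other points (strictly interior) stay interior under a small enough perturbation. This forces antipodality, which gives the two-point case. If instead three or more boundary points exist, choose three that are not collinear (collinearity on a common circle is impossible for three distinct points), and observe that the circumscribed circle of these three is unique, contains $X$ (being the minimum enclosing circle), and by the uniqueness established in stage one must coincide with it.

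Throughout I would only use elementary Euclidean geometry; no appeal to Proposition~\ref{proposition:2}--\ref{proposition:4} is needed, since those concern convex hulls rather than enclosing circles. The perturbation arguments in stages two and three are routine one-sided directional derivative computations on $\bm{c} \mapsto \max_i \|\bm{x}_i - \bm{c}\|$, so the core analytic content really sits in stage one's uniqueness claim.
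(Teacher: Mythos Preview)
The paper does not prove Proposition~\ref{proposition:5}; it merely cites it as a classical fact from Chrystal~\cite{27} (note the superscript citation in the statement). So there is no ``paper's own proof'' to compare against---the proposition is imported wholesale and then used in the proofs of Proposition~\ref{proposition:6} and Theorem~\ref{theorem:3}.

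Your sketch is the standard variational argument and is essentially correct. A couple of remarks. First, you flag uniqueness in stage one as the ``main obstacle'', but in fact it falls out cleanly from the parallelogram law: if $\bm{c}_1\neq\bm{c}_2$ are centers of two minimum disks of radius $r^*$, then for every $\bm{x}$ in their intersection
\[
\left\|\bm{x}-\tfrac{\bm{c}_1+\bm{c}_2}{2}\right\|^2
=\tfrac{1}{2}\|\bm{x}-\bm{c}_1\|^2+\tfrac{1}{2}\|\bm{x}-\bm{c}_2\|^2-\tfrac{1}{4}\|\bm{c}_1-\bm{c}_2\|^2
\le (r^*)^2-\tfrac{1}{4}\|\bm{c}_1-\bm{c}_2\|^2<(r^*)^2,
\]
which immediately gives a strictly smaller enclosing disk. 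Second, in stage three for the three-or-more case, the conclusion does not actually rely on the uniqueness of the \emph{minimum enclosing} circle from stage one; it relies on the elementary fact that three non-collinear points lie on exactly one circle, so any three boundary points already pin down the minimum circle as their circumcircle. With those two clarifications, your argument goes through and matches the classical treatment that the paper is citing.
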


\begin{proposition}\label{proposition:6}
  The center of the minimum circle of a convex hull composed of a set of plane points must be on the interior or boundary of the convex hull.
\end{proposition}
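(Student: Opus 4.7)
The plan is to invoke Proposition~\ref{proposition:5} and split into the two cases it provides, then verify in each case that the center of the minimum circle is a convex combination of points already in the set (hence in the convex hull).

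First I would handle the easy two-point case. If the minimum circle is determined by only two points $\bm{p}_1,\bm{p}_2$ of the set, then Proposition~\ref{proposition:5} states that the segment $\overline{\bm{p}_1\bm{p}_2}$ is a diameter, so the center equals the midpoint $(\bm{p}_1+\bm{p}_2)/2$. This is a convex combination of two points of the set and therefore belongs to $\conv(P)$ by Definition~\ref{definition:convexhull}, which lies on the boundary of the convex hull (or in the interior when the points are not extreme).

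Next I would address the three-point case. Let $\bm{p}_1,\bm{p}_2,\bm{p}_3$ be the three points of the set that lie on the minimum circle, with the circle being their circumscribed circle. The key sub-claim is that the triangle $\triangle\bm{p}_1\bm{p}_2\bm{p}_3$ cannot be obtuse: if it were, say with the obtuse angle at $\bm{p}_3$, then the circle having $\overline{\bm{p}_1\bm{p}_2}$ as diameter would still enclose $\bm{p}_3$ (since $\bm{p}_3$ lies within the region of the plane from which $\overline{\bm{p}_1\bm{p}_2}$ subtends an obtuse angle) and would have a strictly smaller radius than the circumscribed circle, contradicting the minimality. Hence the triangle is acute or right. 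It is a standard fact of planar geometry that the circumcenter of a non-obtuse triangle lies in the closed triangular region (strictly inside for an acute triangle, on the hypotenuse for a right triangle). Since the triangle $\triangle\bm{p}_1\bm{p}_2\bm{p}_3$ is itself contained in $\conv(P)$ (its vertices being points of $P$), the circumcenter lies in $\conv(P)$.

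The combination of the two cases yields the claim. The main obstacle is cleanly justifying the non-obtuse sub-claim in the three-point case; I would handle it by the explicit enclosure argument above rather than appealing to an external reference, since it is the only nontrivial geometric fact needed. Everything else is a direct consequence of Proposition~\ref{proposition:5} and the definition of convex hull.
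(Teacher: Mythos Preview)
Your overall plan matches the paper's: invoke Proposition~\ref{proposition:5}, treat the two-point case via the midpoint formula, and in the three-point case argue that the circumcenter lies in the triangle and hence in $\conv(P)$. The paper's own proof is in fact terser than yours---it simply asserts that the center ``can be expressed linearly by the three vertices'' and concludes---so you have correctly isolated the non-obtuse sub-claim as the only substantive step.

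That sub-claim, however, is not quite established by your argument as written. You say that if the triangle were obtuse at $\bm{p}_3$, the circle on $\overline{\bm{p}_1\bm{p}_2}$ would be strictly smaller and still contain $\bm{p}_3$, ``contradicting the minimality.'' But the minimality at stake is over circles enclosing \emph{all} of $P$, and you have only shown that the smaller circle contains $\{\bm{p}_1,\bm{p}_2,\bm{p}_3\}$, not $P$; so no contradiction is yet derived. One clean repair is a perturbation argument: if the circumcenter $\bm{c}$ lay outside the closed triangle (equivalently, outside $\conv(P\cap\partial C)$), separation gives a unit vector $\bm{v}$ with $\bm{v}^{\mathrm T}(\bm{q}-\bm{c})>0$ for every $\bm{q}\in P$ on the boundary circle; translating $\bm{c}$ a small amount along $\bm{v}$ strictly decreases the distance to every such $\bm{q}$ while keeping interior points covered, yielding a strictly smaller circle enclosing all of $P$, which is the desired contradiction. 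Alternatively, if you read Proposition~\ref{proposition:5} (Chrystal's theorem) in its stronger form---that the two or three points can be chosen so that the minimum circle of $P$ coincides with the minimum circle of just those points---then your original argument goes through verbatim, since an obtuse triangle would have a smaller enclosing circle than its circumscribed one.
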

\begin{proof}
  With Proposition~\ref{proposition:5}, we know the center of the minimum circle is either the incenter of the triangle or the midpoint of the longest side of the triangle. Obviously, the center of the circle can be expressed linearly by the three vertices of the triangle, which are all points on the convex hull.
   According to the definition of the convex set, we know that the center of the circle is also a point on the convex hull. Hence, the conclusion of this proposition follows readily.
\end{proof}

\subsection{Problem statements}

Consider a multi-agent system containing multiple unknown static targets and an agent, the agent can only measure its own position as well as the bearing information from itself to each target, while the distance information from the agent to each target is unavailable. Assume that the agent is with single integrator dynamics, i.e., the velocity of the agent can be directly controlled. The purpose of this paper is to design an algorithm for an agent to locate the unknown targets and enclose them with a safe minimum circle. Concretely, the objectives of the paper include:\par
\begin{enumerate}[1)]
  \item An algorithm should be designed to locate the targets. That is, for all the targets, $\lim_{t \to +\infty }\tilde{\rho}_{i}(t)=0,$ $i=1,2,...,n.$
  \item  A control protocol is designed to make the agent stably encircle the minimum circle of all the targets with the desired distance and tangential speed. That is, $\lim_{t \to +\infty }\rho(t)=r$ and $\lim_{t \to +\infty } \bar{\boldsymbol{\varphi}}^{\mathrm{T}}(t) \dot{\y}(t) =\alpha.$
  \item Safety should be guaranteed. That is, the algorithm shall ensure that collision never happens.
\end{enumerate}

\begin{remark}
  The common definition of collision is that the agent does not collide with the targets. But the agent should avoid entering the area occupied by the targets to achieve real security in some application scenarios. Therefore, in this paper, we say collision happens at time $t'$ if $D(t')<r_{\s}$.
\end{remark}

\section{Main results}\label{section:algorithm}
\subsection{Proposed algorithm}
In this section, dynamic compensators are designed to compensate for the lack of measurement values of some state variables in the system, and then a control protocol is developed to achieve the safe minimum circle circumnavigation pattern through the action of the dynamic compensators.

First, the dynamics of the agent is described by the following nonlinear system
\begin{equation}\label{equation:state}
\left\{
\begin{aligned}
&\dot{\x}=\bm{f}(t,\x,\u), \\
&\z=\bm{h}(t,\x,\u).
\end{aligned}
\right.
\end{equation}
Here, $\x$, $\bm{u}$ and $\bm{z}$ are the state, control input and control output of the agent respectively.  $\bm{f}(\cdot)$ and $\bm{h}(\cdot)$ are unknown vector-valued functions. In fact, the speed of the agent is the control input, and the position of the agent is the system output. That is, $\bm{u}=\dot{\y}$ and $\bm{z}=\y$.

Then, in the multi-agent circumnavigation control system, we expect to take the relative position of the agent and the targets as the state of the system.  Compared with the cartesian coordinates,  in order to make full use of the known bearing information, we use the polar coordinates to describe the system state.  Therefore, we choose $\bm{\varphi}_{i}$ and $\rho_{i}$ as the state variables of the system in this paper. The state equation of this system is as follows
\begin{equation}\label{equation:state1}
\left\{
\begin{aligned}
&\dot{\bm{\varphi}}_{i}=f_{1}(t,\bm{\varphi}_{i},\rho_{i},\bm{u})=-\bar{\bm{\varphi}}_{i} \bar{\bm{\varphi}}_{i}^{\mathrm{T}} \dot{\y}/\rho_{i},\\
&\dot{\rho}_{i}=f_{2}(t,\bm{\varphi}_{i},\rho_{i},\bm{u})=-\bm{\varphi}^{\mathrm{T}}_{i}\dot{\y}.
\end{aligned}
\right.
\end{equation}
In this paper, the agent does not know the position of the targets, i.e., the information of state variable $\rho_{i}$ are unknown, which are required for the design of the control input.
Therefore, we need first to design dynamic compensators to compensate for the lack of the measured values of $\rho_{i}$.
The dynamic compensators are designed as follows
\begin{equation}\label{equation:estimators}
\left\{
\begin{aligned}
&\hat{\x}_{i}=\gamma_{1}(t,\x'_{i},\bm{z},\bm{u},\hat{\rho}_{i})=\y+\hat{\rho}_{i}\phii, \\
&\dot{\hat{\rho}}_{i}=\gamma_{2}(t,\x'_{i},\bm{z},\bm{u},\hat{\rho}_{i})=-\bm{\varphi}^{\mathrm{T}}_{i}\dot{\y}u(\hat{\rho}_{i}-h)+\sgn(\bar{\bm{\varphi}}^{\mathrm{T}}_{i}\dot{\y})(\bar{\bm{\varphi}}^{\mathrm{T}}_{i}\dot{\y}+
\hat{\rho}_{i}\bar{\bm{\varphi}}^{\mathrm{T}}_{i}\dot{\bm{\varphi}}_{i}).
\end{aligned}
\right.
\end{equation}
Here, $\hat{\rho}_{i}$ and $\hat{\x}_{i}$ are the dynamic compensator state  variables and output respectively. In addition, $\sgn(\cdot)$ is the sign function, $u(\cdot)$ is the step function and $h$ is a positive constant.

\begin{remark}
In the design of dynamic compensators, some points need to be explained.
  \begin{enumerate}[(1)]
  \item $\x'_{i}$ in~\eqref{equation:estimators} is different from $\x_{i}$, where $\x_{i}$ refers to the full information of the targets and $\x'_{i}$ refers to the incomplete information of the targets.
      The function of $\x'_{i}$ is to provide a reference point for the agent to obtain the bearing information to the targets without using the coordinate information of the targets.
  \item Since $\bm{\varphi}_{i}$ and $\y$ can be measured, the term $\sgn(\bar{\bm{\varphi}}^{\mathrm{T}}_{i}\dot{\y})\bar{\bm{\varphi}}^{\mathrm{T}}_{i}\dot{\bm{\varphi}}_{i}$
      is known to us until time $t$ if we view the derivative $\dot{\bm{\varphi}}$ as left derivative. Therefore, $\dot{\hat{\rho}}_{i}$ can be solved.
  \end{enumerate}

\end{remark}


Under the action of dynamic compensators, the control input (control protocol) is designed as follows
\begin{equation}\label{equation:protocol}
\u(t)=\dot{\y}(t)=k[\hat{\rho}(t)-\hat{r}(t)]\hat{\bm{\varphi}}(t)+\alpha u(\hat{\rho}(t)-\hat{r}_{\t}(t)-r_{\s})\hat{\bar{\bm{\varphi}}}(t).
\end{equation}

\begin{remark}
  As long as $\hat{\x}_{i}$ is determined, the estimated minimum circle is unique.  So only know $\y$ and $\hat{\x}_{i}$, one can get $\hat{r}_{\t}$, $\hat{r}$, $\hat{\rho}$, $\hat{\bm{\varphi}}$ and $\hat{\bar{\bm{\varphi}}}$.
\end{remark}

Finally, to ensure that the agent can safely enclose the minimum circle of the targets, we need the following assumptions.

\begin{assumption}\label{assumption:D0>r}
  $ r_{\s}\leq D(0)$.
\end{assumption}

\begin{assumption}\label{assumption:d>r_s}
  $h<r_{\s}<d$.
\end{assumption}

\begin{assumption}\label{assumption:0<rhohat0<rho0}
  $h\leq \hat{\rho}_{i}(0) \leq \rho_{i}(0)$ for $i=1,2,...,n $.
\end{assumption}
\noindent These assumptions are quite reasonable and weak. Assumption~\ref{assumption:D0>r} is to ensure that collision does not happen in the beginning. Assumption~\ref{assumption:d>r_s} means that the desired enclosing distance is larger than the safety distance and the safety distance is larger than a small positive constant. Assumption~\ref{assumption:0<rhohat0<rho0} is a constraint on the initial value of each target's estimate and is loose.

\subsection{Stability Analysis}\label{section:stability}
In this section, we prove the stability and convergence property of the proposed algorithms. We first verify the safety can be guaranteed, and then prove that the agent eventually performs safe minimum circle circumnavigation around all of the targets successfully.

\subsubsection{Safety}
In this subsection, our goal is to confirm that collision does not happen in the whole time domain by using the proposed algorithm. To better explain that safety performance is guaranteed, the concept of the convex hull is applied. However, it should be noted that in real implementation, it is not necessary to construct the convex hull. \par
Denote

\begin{equation*}
v_{i}(t) \triangleq \bm{\varphi}^{\mathrm{T}}_{i}(t)\dot{\y}(t),\quad \bar{v}_{i}(t) \triangleq \bar{\bm{\varphi}}^{\mathrm{T}}_{i}(t)\dot{\y}(t),\quad \omega_{i}(t) \triangleq \frac{\bar{v}_{i}(t)}{\rho_{i}(t)}.
\end{equation*}
Omitting ``$(t)$'' for simplicity, it follows that

\begin{align}
  \dot{\bm{\varphi}}_{i}
  &=\notag
  \frac{\mathrm{d}}{\mathrm{d} t} \frac{\x_{i}-\y}{\left\|\x_{i}-\y\right\|}
  =
  \frac{\left(I-\bm{\varphi}_{i}\bm{\varphi}_{i}^{\mathrm{T}}\right)\left(\dot{\x}_{i}-\dot{\y}\right)}{\left\|\x_{i}-\y\right\|}
  \\&=
  \frac{-\bar{\bm{\varphi}}_{i} \bar{\bm{\varphi}}_{i}^{\mathrm{T}} \dot{\y}}{\left\|\x_{i}-\y\right\|}
  =
  -\frac{\bar{v}_{i}}{\rho_{i}} \bar{\bm{\varphi}}_{i}
  =\label{equation:4}
  -\omega_{i} \bar{\bm{\varphi}}_{i},
\end{align}
\begin{equation}\label{equation:5}
  \dot{\hat{\x}}_{i}=\frac{\mathrm{d}}{\mathrm{d} t} (\y+\hat{\rho}_{i}\phii)=\dot{\y}+\dot{\hat{\rho}}_{i}\phii+\hat{\rho}_{i}\dot{\phii},
\end{equation}
\begin{align}
  \dot{\hat{\rho}}_i
  &=\notag
  -\bm{\varphi}^{\mathrm{T}}_{i}\dot{\y}u(\hat{\rho}_{i}-h)
  +\sgn(\bar{\bm{\varphi}}^{\mathrm{T}}_{i}\dot{\y})(\bar{\bm{\varphi}}^{\mathrm{T}}_{i}\dot{\y}+
\hat{\rho}_{i}\bar{\bm{\varphi}}^{\mathrm{T}}_{i}\dot{\bm{\varphi}}_{i})
  \\&=\notag
  - v_{i}u(\hat{\rho}_{i}-h) + | \bar{v}_i | ( 1 - \frac{\hat{\rho}_i}{\rho_i} )
  \\&=\label{equation:6}
  - v_{i}u(\hat{\rho}_{i}-h) - | \bar{v}_i | \frac{\tilde{\rho}_i}{\rho_i} ,
\end{align}
\begin{equation}\label{equation:7}
  \dot{\rho}_{i}=\frac{\mathrm{d}}{\mathrm{d} t}\|\x_{i}-\y\|=\phii^{\mathrm{T}}(\dot{\x}_{i}-\dot{\y})=-v_{i},
\end{equation}
and
\begin{align}
  \dot{\tilde{\rho}}_{i}
  &=\notag
  \dot{\hat{\rho}}_{i}-\dot{\rho}_{i}
  =v_{i}[1-u(\hat{\rho}_{i}-h)]-\frac{|\bar{v}_{i}|}{\rho_{i}}\tilde{\rho_{i}}
  \\&=\label{equation:8}
  v_{i}[1-u(\hat{\rho}_{i}-h)]-|\omega_{i}|\tilde{\rho}_{i}.
\end{align}

To ensure that the collision will not occur, we first give the important angle relations.
\begin{lemma}\label{lemma:2}
  If $D(t)>0$ and $\hat{D}(t)>0$, then it is true that
  $\bm{\psi}^{\mathrm{T}}(t)\bm{\varphi}(t)>0$, $\bm{\psi}^{\mathrm{T}}(t)\hat{\bm{\varphi}}(t)>0.$
\end{lemma}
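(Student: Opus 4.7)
My plan is to reduce both inequalities to Proposition~\ref{proposition:2}, using Proposition~\ref{proposition:6} to locate the two circle centres inside their respective convex hulls. For the first inequality, Proposition~\ref{proposition:6} places $\bm{c}$ in $X_{\c}$, so I can write $\bm{c}=\sum_{i=1}^{n} k_i\x_i$ with $k_i\ge 0$ and $\sum_{i=1}^{n} k_i=1$, which rearranges to
\begin{equation*}
\bm{c}-\y(t)=\sum_{i=1}^{n} k_i\,\rho_i(t)\,\phii(t).
\end{equation*}
Because $D(t)>0$ puts $\y$ outside $X_{\c}$, Proposition~\ref{proposition:2} (taken with $P=X$) yields $\bm{\psi}^{\mathrm{T}}\phii>0$ for every $i$, and the same hypothesis forces $\rho_i(t)>0$ since every target belongs to $X_{\c}$. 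Because at least one $k_i$ is strictly positive (their sum is one), this gives $\bm{\psi}^{\mathrm{T}}(\bm{c}-\y)>0$. Dividing by $\rho(t)=\|\bm{c}-\y\|>0$ (positive because $\bm{c}\in X_{\c}$ while $\y\notin X_{\c}$) yields $\bm{\psi}^{\mathrm{T}}\bm{\varphi}>0$.

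The second inequality I attack with essentially the same template, but the twist is that $\bm{\psi}$ comes from the real targets while $\hat{\bm{\varphi}}$ comes from the estimates, so the two ingredients must be bridged. Proposition~\ref{proposition:6} now places $\hat{\bm{c}}$ in $\hat{X}_{\c}$; writing $\hat{\bm{c}}=\sum_{i=1}^{n} k_i\hat{\x}_i$ and using $\hat{\x}_i=\y+\hat{\rho}_i\phii$ gives
\begin{equation*}
\hat{\bm{c}}-\y(t)=\sum_{i=1}^{n} k_i\,\hat{\rho}_i(t)\,\phii(t),
\end{equation*}
so the same factor $\bm{\psi}^{\mathrm{T}}\phii>0$ from Proposition~\ref{proposition:2} (still applied to the true set $X$, since that is what defines $\bm{\psi}$) drives the argument. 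Provided $\hat{\rho}_i(t)>0$ for every $i$, the right-hand side pairs positively with $\bm{\psi}$ and has at least one strictly positive term, and division by $\hat{\rho}(t)>0$ (from $\hat{D}(t)>0$ combined with $\hat{\bm{c}}\in\hat{X}_{\c}$) delivers $\bm{\psi}^{\mathrm{T}}\hat{\bm{\varphi}}>0$.

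The main obstacle is precisely the side condition $\hat{\rho}_i(t)>0$, which is essential for the second inequality: without it, $\hat{\x}_i-\y$ could be a negative multiple of $\phii$ and the cone argument collapses. This positivity is not immediate from the formulas stated so far, so I would first establish it as a preliminary invariant, combining Assumption~\ref{assumption:0<rhohat0<rho0} with a boundary analysis of~\eqref{equation:estimators} at $\hat{\rho}_i=h$: there the step-function term vanishes, and substituting $\dot{\bm{\varphi}}_i=-\omega_i\bar{\bm{\varphi}}_i$ reduces the remaining expression to $|\bar v_i|(1-h/\rho_i)\ge 0$ as long as $\rho_i\ge h$ (which follows from the safety bound $D\ge r_{\s}>h$ granted by Assumption~\ref{assumption:d>r_s}). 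Hence $\hat{\rho}_i(t)\ge h>0$ is forward invariant, and once this invariant is in hand, the second half of the lemma follows mechanically from the template of the first half.
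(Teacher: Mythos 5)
Your core argument is correct and is essentially the paper's own proof in algebraic form: Proposition~\ref{proposition:6} places $\bm{c}$ in $X_{\c}$ and $\hat{\bm{c}}$ in $\hat{X}_{\c}$, Proposition~\ref{proposition:2} gives $\bm{\psi}^{\mathrm{T}}(t)\bm{\varphi}_i(t)>0$ for all $i$, and writing each centre as a convex combination of the (estimated) target positions is just an explicit rendering of the paper's observation that $\bm{\varphi}(t)$ and $\hat{\bm{\varphi}}(t)$ lie in the cone spanned by the $\bm{\varphi}_i(t)$, so their angles to $\bm{\psi}(t)$ stay below $\pi/2$. You are also right that the second inequality silently needs $\hat{\rho}_i(t)>0$ (for $n=1$ and $\hat{\rho}_1<0$ the claim is false, since then $\hat{\bm{\varphi}}=-\bm{\varphi}_1=-\bm{\psi}$); the paper's proof uses this implicitly as well when it identifies the unit vectors toward the $\hat{\x}_i$ with the $\bm{\varphi}_i$.

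Where the proposal is weakest is the attempt to establish $\hat{\rho}_i(t)\ge h$ inside this lemma. First, it is not needed here: in the paper this lemma is invoked only within Lemma~\ref{lemma:4}, whose hypothesis $0<\hat{\rho}_i(t)\le\rho_i(t)$ supplies the positivity, and the invariant itself is proved separately in Lemmas~\ref{lemma:5} and~\ref{lemma:6}. Second, as sketched it is circular: you justify $\rho_i(t)\ge h$ from the safety bound $D(t)\ge r_{\s}$, but that bound holds for all $t$ only as the conclusion of Lemma~\ref{lemma:4}, which takes $0<\hat{\rho}_i(t)\le\rho_i(t)$ as input (Assumptions~\ref{assumption:D0>r} and~\ref{assumption:d>r_s} only give $D(0)\ge r_{\s}>h$). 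The paper breaks this loop in Lemma~\ref{lemma:6} with a first-crossing-time argument, and its boundary analysis in Lemma~\ref{lemma:5} works on the region $\hat{\rho}_i<h$, where the step function is exactly zero and $\dot{\hat{\rho}}_i=-|\bar{v}_i|\tilde{\rho}_i/\rho_i\ge 0$ follows from $\tilde{\rho}_i\le 0$ alone, with no appeal to $\rho_i\ge h$ (note also that at $\hat{\rho}_i=h$ the term $u(\hat{\rho}_i-h)$ need not vanish). So either state the positivity of $\hat{\rho}_i$ as a hypothesis of the lemma, as the paper effectively does through Lemma~\ref{lemma:4}, or reproduce the first-crossing argument; the invariance sketch as written does not stand on its own.
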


\begin{proof}
  With Proposition~\ref{proposition:2}, we know that $\hat{\bm{\psi}}^{\mathrm{T}}(t)\bm{\varphi}_{i}(t)>0$ and $\bm{\psi}^{\mathrm{T}}(t)\bm{\varphi}_{i}(t)>0$ hold at any time $t$ for $i=1,2,...,n$. With Proposition~\ref{proposition:6}, we learn that $\bm{\varphi}(t)$ and $\hat{\bm{\varphi}}(t)$ must be inside $\bm{\varphi}_r$ and $\bm{\varphi}_l$. So $\langle\hat{\bm{\varphi}}(t),\bm{\psi}(t)\rangle \leq \max_{1\leq i\leq n}\langle\bm{\varphi}_{i}(t),\bm{\psi}(t)\rangle <\pi/2$, and $\langle\bm{\varphi}(t),\bm{\psi}(t)\rangle<\pi/2$ holds as well, which imply that $\bm{\psi}^{\mathrm{T}}(t)\bm{\varphi}(t)>0$ and $\bm{\psi}^{\mathrm{T}}(t)\hat{\bm{\varphi}}(t)>0.$ Hence, the conclusion of this lemma follows readily.
\end{proof}

Then, based on the above lemma and under some restricted conditions, the following lemma is given to ensure the safety.
\begin{lemma}\label{lemma:4}
  Suppose Assumptions~\ref{assumption:D0>r},~\ref{assumption:d>r_s},~\ref{assumption:0<rhohat0<rho0} hold. Under the dynamic compensators~\eqref{equation:estimators} and the control protocol ~\eqref{equation:protocol}, if $0 < \hat{\rho}_{i}(t) \leq \rho_{i}(t),$ for $i=1,2,...,n$, then we have
  $$D(t)\geq r_{\s}.$$
\end{lemma}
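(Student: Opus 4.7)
The plan is a first-exit-time contradiction argument. Assume toward contradiction that some $t'$ satisfies $D(t') < r_{\s}$. Since $D(0) \geq r_{\s}$ by Assumption~\ref{assumption:D0>r} and $t \mapsto D(t)$ is continuous (the composition of the continuous trajectory $\y(t)$ with the distance-to-closed-convex-set map), I define $t^{*} \triangleq \inf\{t \geq 0 : D(t) < r_{\s}\}$, so $D(t^{*}) = r_{\s} > 0$, $D(t) \geq r_{\s}$ on $[0,t^{*}]$, and there is a sequence $t_k \downarrow t^{*}$ with $D(t_k) < D(t^{*})$. Because $D(t^{*}) > 0$, $\y(t^{*}) \notin X_{\c}$, so Proposition~\ref{proposition:4} makes $D$ differentiable at $t^{*}$ with $\dot{D}(t^{*}) = -\bm{\psi}(t^{*})^{\mathrm{T}}\dot{\y}(t^{*})$, and extremality of $t^{*}$ forces $\dot{D}(t^{*}) \leq 0$. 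It therefore suffices to derive $\dot{D}(t^{*}) > 0$ from the control protocol~\eqref{equation:protocol}.

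The bridge between $D$ and the control is a pair of chained geometric bounds. \emph{First}, since $0 < \hat{\rho}_{i}(t) \leq \rho_{i}(t)$ under the hypothesis of the lemma, applying Proposition~\ref{proposition:3} with $\bm{v}_{i} = \bm{\varphi}_{i}$, $k_{i} = \hat{\rho}_{i}$, $l_{i} = \rho_{i}$ (shifted to the agent as origin) yields $\hat{D}(t) \leq D(t)$; intuitively, pulling each estimated target inward along its own bearing cannot push the convex hull away from the agent. \emph{Second}, because the minimum covering disk of $\hat{X}$ (center $\hat{\bm{c}}$, radius $\hat{r}_{\t}$) is convex and contains all points of $\hat{X}$, it contains $\hat{X}_{\c}$; in particular $\prj(\y,\hat{X}_{\c})$ lies at distance at most $\hat{r}_{\t}$ from $\hat{\bm{c}}$, so the triangle inequality gives $\hat{\rho}(t) \leq \hat{D}(t) + \hat{r}_{\t}(t)$. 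Combining the two at $t^{*}$,
\[
\hat{\rho}(t^{*}) \;\leq\; \hat{r}_{\t}(t^{*}) + r_{\s} \;<\; \hat{r}_{\t}(t^{*}) + d \;=\; \hat{r}(t^{*}),
\]
where the strict inequality invokes Assumption~\ref{assumption:d>r_s}.

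Consequently the step-function factor $u(\hat{\rho}(t^{*}) - \hat{r}_{\t}(t^{*}) - r_{\s})$ vanishes (interpreting $u$ as zero on the non-positive ray; the boundary-coincidence case can be handled by passing to a sequence $t_k\downarrow t^{*}$), so the protocol collapses to the purely radial expression $\dot{\y}(t^{*}) = k[\hat{\rho}(t^{*})-\hat{r}(t^{*})]\hat{\bm{\varphi}}(t^{*})$ with the bracket strictly negative. Substituting,
\[
\dot{D}(t^{*}) \;=\; -k[\hat{\rho}(t^{*})-\hat{r}(t^{*})]\,\bm{\psi}(t^{*})^{\mathrm{T}}\hat{\bm{\varphi}}(t^{*}) \;>\; 0,
\]
since Lemma~\ref{lemma:2} gives $\bm{\psi}^{\mathrm{T}}\hat{\bm{\varphi}} > 0$. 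This contradicts $\dot{D}(t^{*}) \leq 0$ and closes the argument.

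The main obstacle I anticipate is the careful justification of Lemma~\ref{lemma:2} at $t^{*}$, which requires $\hat{D}(t^{*}) > 0$ and $\hat{\rho}(t^{*}) > 0$ so that $\hat{\bm{\psi}}$ and $\hat{\bm{\varphi}}$ are well defined: the second is a genuine non-degeneracy that should be extractable from $\hat{\rho}_{i} \geq h > 0$ together with the requirement that the agent not coincide with the estimated circumcenter, while the first can degenerate when the agent slips inside $\hat{X}_{\c}$, in which case I would argue along a sequence approaching $t^{*}$ from the side where these quantities are positive, or sharpen the geometry to show that the purely radial push actually drives $\hat{D}$ away from zero as well.
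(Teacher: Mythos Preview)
Your proof is correct and follows essentially the same route as the paper's: both use Proposition~\ref{proposition:3} to get $\hat{D}\le D$, the containment $\hat{X}_{\c}\subset B(\hat{\bm c},\hat r_{\t})$ to deduce $\hat\rho\le \hat D+\hat r_{\t}$, then kill the tangential term via the step function and compute $\dot D=-\bm\psi^{\mathrm T}\dot{\y}$ with Lemma~\ref{lemma:2} to obtain $\dot D>0$. The only cosmetic difference is that the paper argues the barrier inequality on the open region $0<D(t)<r_{\s}$ (so the step-function argument is strictly negative and the $u(0)$ edge case never arises), whereas you evaluate at the first exit time $D(t^{*})=r_{\s}$ and then patch the boundary case; your concern about $\hat D(t^{*})>0$ is handled exactly by the half-plane observation you sketch, and the paper makes the same point in the lemma immediately following.
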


\begin{proof}
   Considering Proposition~\ref{proposition:3} and $0 < \hat{\rho}_{i}(t) \leq \rho_{i}(t)$, we know $\hat{D}(t)\leq D(t)$. If $0<D(t)<r_{\s}$, we have $\hat{D}(t)\leq D(t)<r_{\s}$. It follows that $\hat{\rho}(t)<\hat{r}_{\t}(t)+r_{\s}$. Thus $\dot{\y}(t)=k[\hat{\rho}(t)-\hat{r}(t)]\hat{\bm{\varphi}}(t)$.
   Since $D(t)>0$, $\bm{\psi}(t)$ is definable. Define a function \par
  \begin{center}
     $f: \mathbb{R}^{2}\backslash X_{c}\rightarrow \mathbb{R}_{+},$\par
   $\qquad \qquad \qquad \bm{a}\mapsto \dist(\bm{a},X_{c}).$\par
  \end{center}
   In view of Proposition~\ref{proposition:4}, we know $\nabla f(\y(t))=-\bm{\psi}(t).$ Using the chain rule, we have
  $$\dot{D}(t)=[\nabla f(\y(t))]^{\mathrm{T}}\dot{\y}(t)=-\bm{\psi}^{\mathrm{T}}(t)\dot{\y}(t)=-k[\hat{\rho}(t)-\hat{r}(t)]\bm{\psi}^{\mathrm{T}}(t)\hat{\bm{\varphi}}(t).$$
  According to  Lemma~\ref{lemma:2}, from Assumption~\ref{assumption:d>r_s}, we know that $\bm{\psi}^{\mathrm{T}}(t)\hat{\bm{\varphi}}(t)>0$ and $\hat{\rho}(t)<\hat{r}(t)$ since $d>r_{\s}$, which implies that $\dot{D}(t)>0$ if $D(t)<r_{\s}$. With Assumption~\ref{assumption:D0>r}, $D(0)\geq r_{\s}$, then it follows that $D(t)\geq r_{\s}$. This completes the proof.

\end{proof}
Finally, the following two lemmas prove that the given constraint conditions are satisfied.
\begin{lemma}\label{lemma:5}
  Suppose Assumptions~\ref{assumption:D0>r},~\ref{assumption:d>r_s},~\ref{assumption:0<rhohat0<rho0} hold. Under the dynamic compensators~\eqref{equation:estimators} and the control protocol~\eqref{equation:protocol}, if $\tilde{\rho}_{i}(t)\leq0$, then we have $\hat{\rho}_{i}(t)\geq h$.
\end{lemma}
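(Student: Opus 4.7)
The plan is to argue by contradiction and invariance: assume some time $t_*$ satisfies $\tilde{\rho}_i(t_*)\le 0$ and $\hat{\rho}_i(t_*)<h$, and drive the assumption to a contradiction via the two state equations \eqref{equation:6} and \eqref{equation:8}.

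First I would exploit the fact that on any interval where $\hat{\rho}_i\ge h$, the step function $u(\hat{\rho}_i-h)$ equals $1$, so \eqref{equation:8} collapses to the homogeneous linear ODE $\dot{\tilde{\rho}}_i=-|\omega_i|\tilde{\rho}_i$. Its solution $\tilde{\rho}_i(t)=\tilde{\rho}_i(0)\exp\bigl(-\int_0^t|\omega_i(s)|\,ds\bigr)$ preserves the sign of $\tilde{\rho}_i$; since Assumption~\ref{assumption:0<rhohat0<rho0} supplies $\tilde{\rho}_i(0)=\hat{\rho}_i(0)-\rho_i(0)\le 0$, this gives $\tilde{\rho}_i(t)\le 0$ on every such interval.

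Next, because $\hat{\rho}_i(0)\ge h$ by Assumption~\ref{assumption:0<rhohat0<rho0} and $\hat{\rho}_i(t_*)<h$ by hypothesis, continuity of the trajectory produces a first crossing time $T\in[0,t_*]$ with $\hat{\rho}_i(T)=h$ and $\hat{\rho}_i<h$ on $(T,t_*]$. The previous step supplies $\tilde{\rho}_i(T)\le 0$. On a right neighborhood of $T$ one has $u(\hat{\rho}_i-h)=0$, so \eqref{equation:6} reduces to $\dot{\hat{\rho}}_i=-|\bar{v}_i|\tilde{\rho}_i/\rho_i\ge 0$. This contradicts $\hat{\rho}_i$ strictly dropping below $h$ immediately after $T$, closing the argument and yielding the lemma.

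The main obstacle is the borderline subcase $\tilde{\rho}_i(T)=0$ (equivalently $\rho_i(T)=h$), where the right-hand derivative above is only weakly nonnegative and no strict contradiction is obtained from the sign chase alone. Here I would invoke a Filippov/sliding-mode analysis at the discontinuity surface $\hat{\rho}_i=h$: the one-sided limits of $\dot{\hat{\rho}}_i$ from above (with $u=1$, which gives $-v_i$ up to a vanishing term) and from below (with $u=0$, which vanishes) both point toward the surface whenever $v_i(T)>0$, so the trajectory slides along $\hat{\rho}_i=h$ rather than crossing it; and when $v_i(T)\le 0$ the upper derivative is nonnegative so the crossing is impossible anyway. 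This nonsmooth piece, not the routine sign chase, is the genuinely delicate part of the proof.
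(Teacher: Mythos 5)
There is a genuine gap, and it comes from the way you have read the hypothesis. The lemma, as the paper states and actually uses it (see the proof of Lemma~\ref{lemma:6}), assumes $\tilde{\rho}_{i}(s)\leq 0$ for all $s$ up to the time under consideration; with that reading the proof is a one-line barrier argument: whenever $\hat{\rho}_{i}<h$ the step function in~\eqref{equation:6} is zero, so $\dot{\hat{\rho}}_{i}=-|\bar{v}_{i}|\tilde{\rho}_{i}/\rho_{i}\geq 0$, and since $\hat{\rho}_{i}(0)\geq h$ (Assumption~\ref{assumption:0<rhohat0<rho0}) the trajectory can never drop below $h$. You instead assume only $\tilde{\rho}_{i}(t_{*})\leq 0$ at the single time $t_{*}$ and try to recover the sign of $\tilde{\rho}_{i}$ at earlier times from the dynamics. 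That is where the argument breaks: the sign-preservation in your first step, via $\dot{\tilde{\rho}}_{i}=-|\omega_{i}|\tilde{\rho}_{i}$ from~\eqref{equation:8}, is valid only on an interval whose \emph{left} endpoint is known to have $\tilde{\rho}_{i}\leq 0$, i.e.\ in practice only on the interval starting at $t=0$ as long as $\hat{\rho}_{i}\geq h$ holds throughout. Your $T$ is the \emph{last} crossing of the level $h$ before $t_{*}$, and nothing in your argument excludes earlier excursions into the region $\hat{\rho}_{i}<h$, on which~\eqref{equation:8} reads $\dot{\tilde{\rho}}_{i}=v_{i}-|\omega_{i}|\tilde{\rho}_{i}$ and $\tilde{\rho}_{i}$ can become positive when $v_{i}>0$. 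So the step ``the previous step supplies $\tilde{\rho}_{i}(T)\leq 0$'' is unsupported; ruling out those earlier excursions is exactly the conclusion you are trying to prove (the joint bootstrapping of $\tilde{\rho}_{i}\leq 0$ and $\hat{\rho}_{i}\geq h$ is deliberately deferred to the minimal-crossing-time argument of Lemma~\ref{lemma:6}, not done inside this lemma).

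The Filippov/sliding-mode discussion is also aimed at a non-issue. Once $\tilde{\rho}_{i}\leq 0$ is available on $(T,t_{*}]$ (which under the paper's reading of the hypothesis it is), the weak inequality $\dot{\hat{\rho}}_{i}\geq 0$ there suffices: integrating from $T$ to $t_{*}$ gives $\hat{\rho}_{i}(t_{*})\geq\hat{\rho}_{i}(T)=h$, which already contradicts $\hat{\rho}_{i}(t_{*})<h$; no strict decrease/increase and no analysis of the vector fields on the switching surface $\hat{\rho}_{i}=h$ is needed. In short: restate the hypothesis as holding on the whole interval $[0,t]$, drop the attempt to re-derive the sign of $\tilde{\rho}_{i}$ and the sliding-mode case, and the remaining last-crossing argument collapses to the paper's proof.
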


\begin{proof}
  It follows from~\eqref{equation:6} that $\dot{\hat{\rho}}_{i}(t)=- | \bar{v}_i(t) | \tilde{\rho}_i(t)/\rho_i(t)\geq0$ if $\hat{\rho}_{i}(t)< h$.
  Thus it implies that $\dot{\hat{\rho}}_{i}(t)\geq0$ if $\hat{\rho}_{i}(t)< h$. With Assumption~\ref{assumption:0<rhohat0<rho0}, we can conclude that $\hat{\rho}_{i}(t)\geq h$.
\end{proof}

\begin{lemma}\label{lemma:6}
  Suppose Assumptions~\ref{assumption:D0>r},~\ref{assumption:d>r_s},~\ref{assumption:0<rhohat0<rho0} hold. Under the dynamic compensators~\eqref{equation:estimators} and the control protocol~\eqref{equation:protocol}, we have
  $h \leq \hat{\rho}_{i}(t) \leq \rho_{i}(t),$ for $i=1,2,...,n$.
\end{lemma}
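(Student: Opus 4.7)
The proof is a bootstrap (continuity) argument that couples the previous two lemmas with the scalar error ODE~\eqref{equation:8}. The key observation is that Lemmas~\ref{lemma:4} and~\ref{lemma:5} already give us almost everything we need, provided we can propagate the two inequalities $\hat{\rho}_i(t)\le\rho_i(t)$ and $\hat{\rho}_i(t)\ge h$ forward in time simultaneously.

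First I would introduce the maximal interval on which the desired conclusion holds, namely
\[
T^{*}\triangleq \sup\bigl\{T\ge 0 \,:\, 0<\hat{\rho}_i(t)\le \rho_i(t)\ \text{for all}\ t\in[0,T],\ i=1,\dots,n\bigr\}.
\]
By Assumption~\ref{assumption:d>r_s} we have $h>0$, and Assumption~\ref{assumption:0<rhohat0<rho0} gives $h\le \hat{\rho}_i(0)\le\rho_i(0)$, so continuity of the trajectories yields $T^{*}>0$. I would then assume for contradiction that $T^{*}<+\infty$ and show the inequalities actually extend past $T^{*}$.

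On $[0,T^{*})$ the hypotheses of Lemma~\ref{lemma:4} and Lemma~\ref{lemma:5} both hold: the former yields $D(t)\ge r_{\s}$, hence $\rho_i(t)\ge D(t)\ge r_{\s}>h>0$; the latter, applied through $\tilde{\rho}_i(t)\le 0$, gives $\hat{\rho}_i(t)\ge h$. In particular $u(\hat{\rho}_i-h)=1$ throughout $[0,T^{*})$, so~\eqref{equation:8} collapses to the linear equation
\[
\dot{\tilde{\rho}}_i(t) = -|\omega_i(t)|\,\tilde{\rho}_i(t),
\]
whose solution is $\tilde{\rho}_i(t)=\tilde{\rho}_i(0)\exp\!\bigl(-\!\int_0^t|\omega_i(s)|\,\mathrm{d}s\bigr)$. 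Since $\tilde{\rho}_i(0)\le 0$, this forces $\tilde{\rho}_i(t)\le 0$ on $[0,T^{*})$, and by continuity also at $T^{*}$; combined with $\hat{\rho}_i(T^{*})\ge h>0$, we obtain $0<\hat{\rho}_i(T^{*})\le \rho_i(T^{*})$. A standard continuity argument then extends the inequality strictly beyond $T^{*}$, contradicting maximality. Hence $T^{*}=+\infty$, and the same reasoning shows $\hat{\rho}_i(t)\ge h$ for all $t\ge 0$.

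\textbf{Main obstacle.} The genuine difficulty is the circularity between the three inequalities: $\hat{\rho}_i\le\rho_i$ is needed to invoke Lemma~\ref{lemma:4} for safety, safety $D\ge r_{\s}$ is needed to keep $\rho_i$ bounded away from zero so that equation~\eqref{equation:8} is well-defined, and the lower bound $\hat{\rho}_i\ge h$ (from Lemma~\ref{lemma:5}) is what makes the step function equal to $1$ and yields the contractive ODE. The maximal-interval bootstrap resolves this by propagating all three inequalities jointly from the initial data, which is why the weak initial constraints in Assumptions~\ref{assumption:D0>r}--\ref{assumption:0<rhohat0<rho0} are exactly tight.
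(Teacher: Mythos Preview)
Your proposal is correct and follows essentially the same route as the paper: both arguments couple Lemma~\ref{lemma:4}, Lemma~\ref{lemma:5}, and the reduced error ODE~\eqref{equation:8} to propagate the pair of inequalities $\hat{\rho}_i\ge h$ and $\tilde{\rho}_i\le 0$ forward from the initial data. The paper organizes this as a first-hitting-time analysis (first time $\tilde{\rho}_i=0$, then first time $\hat{\rho}_i=h$), whereas you package the same logic into a single maximal-interval bootstrap; the only place you are less explicit than the paper is the ``standard continuity argument'' extending past $T^{*}$, which in the boundary case $\tilde{\rho}_i(T^{*})=0$ needs exactly the paper's observation that then $\hat{\rho}_i(T^{*})=\rho_i(T^{*})\ge r_{\s}>h$ strictly, so the linear ODE and uniqueness force $\tilde{\rho}_i\equiv 0$ just beyond $T^{*}$.
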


\begin{proof}

 First, we prove that $\hat{\rho}_{i}(t) \leq \rho_{i}(t),$ that is, $\tilde{\rho}_{i}(t)\leq0$, which is performed by three steps.
 \par
  \emph{Step 1.}
 Suppose there exists $t_1>0$ such that $\tilde{\rho}_{i}(t_{1})=0$. Define the zero set $A\triangleq\{t\mid\tilde{\rho}_{i}(t)=0\}$. $A$ is a closed set because $\tilde{\rho}_{i}(t)$ is continuous and set $\{0\}$ is closed. Define $t_2\triangleq\min\{A\}$.
 With Assumption~\ref{assumption:0<rhohat0<rho0},
 when $t\in[0,t_{2}]$, we have $\tilde{\rho}_{i}(t)\leq0$. With Lemma~\ref{lemma:5},
 $h \leq \hat{\rho}_{i}(t) \leq \rho_{i}(t)$ hold. Then with Lemma~\ref{lemma:4},
 we have $\rho_{i}(t_2)=\hat{\rho}_{i}(t_2)\geq D(t_2)\geq r_{\s}> h$.
 \par
  \emph{Step 2.}
 We prove that there is $\hat{\rho}_{i}(t)> h$ when $t > t_2$.
 Apply the reduction to absurdity. Suppose there exists $t_3>t_2$ such that $\hat{\rho}_{i}(t_{3})=h$. Define the zero set $B\triangleq\{t\mid\hat{\rho}_{i}(t)=h, t\geq t_2\}$. $B$ is also a closed set. Define $t_4\triangleq\min\{B\}$.\par
 When $t\in[t_2,t_{4}]$, $\hat{\rho}_{i}(t)\geq h$. With~\eqref{equation:8}, we get $\dot{\tilde{\rho}}_{i}(t)=-|\omega_{i}|\tilde{\rho}_{i}$. Solving the differential equation, $\tilde{\rho}_i (t) = \tilde{\rho}_i(t_2)\exp\left(- \!\! \int_{t_2}^{t} |\omega_{i}(\tau)| \mathrm{d}\tau \right)=0$, it turns out that $h \leq \hat{\rho}_{i}(t) = \rho_{i}(t).$ With Lemma~\ref{lemma:4}, $\rho_{i}(t_4)=\hat{\rho}_{i}(t_4)\geq D(t_4)\geq r_{\s}> h$ still holds, so the contradiction appears. Therefore,  when $t>t_2$, $\hat{\rho}_{i}(t)=h$ is impossible, which also implies that $\hat{\rho}_{i}(t)> h$ and $\tilde{\rho}_{i}(t)=0$ hold when $t > t_2$.
 \par
  \emph{Step 3.}
  Based on the foregoing analysis, when $\tilde{\rho}_{i}(t_2)=0$, then $\hat{\rho}_{i}(t)\geq h$
  holds for $t > t_2$. With~\eqref{equation:8}, we have $\tilde{\rho}_{i}(t)=0$ when $t\geq t_2$. Since $\tilde{\rho}_{i}(t)$ is continuous and $\tilde{\rho}_{i}(0) < 0$, one can conclude that $\tilde{\rho}_{i}(t)\leq0$.\par

 The remaining part of the proof proceeds by showing that $\hat{\rho}_{i}(t)\geq h.$ In fact, from the conclusion of the previous step and Lemma~\ref{lemma:5}, we can directly get $ \hat{\rho}_{i}(t)\geq h$.\par
 Hence, the conclusion of this lemma follows readily.
\end{proof}

\begin{remark}
  Since $h \leq \hat{\rho}_{i}(t) \leq \rho_{i}(t)$ always holds, with lemma~\ref{lemma:4}, we know $D(t)\geq r_{\s}$ is always established. So the agent does not collide with the convex hull of a group of targets in the moving process of the agent, which guarantees real security.
\end{remark}

\begin{lemma}
  Suppose Assumptions~\ref{assumption:D0>r},~\ref{assumption:d>r_s},~\ref{assumption:0<rhohat0<rho0} hold. Under the dynamic compensators~\eqref{equation:estimators} and the control protocol~\eqref{equation:protocol}, we have $\hat{D}(t)>0.$
\end{lemma}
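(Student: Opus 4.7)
The plan is to show that $\y(t)$ sits strictly on one side of a supporting hyperplane of $\hat{X}_{\c}(t)$, by exploiting the fact that every estimated target lies in the same open half-plane relative to $\y(t)$ as the true convex hull $X_{\c}$. The crucial inputs are already in place: Lemma~\ref{lemma:6} gives $h \leq \hat{\rho}_{i}(t) \leq \rho_{i}(t)$ for every $i$, and Lemma~\ref{lemma:4} (applied with the conclusion of Lemma~\ref{lemma:6}) gives $D(t) \geq r_{\s} > 0$, so $\y(t) \notin X_{\c}$ and the unit vector $\bm{\psi}(t) = \uv(\y(t), X_{\c})$ is well defined.

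First I would invoke Proposition~\ref{proposition:2} with $P = X$, $\bm{a} = \y(t)$, and $\bm{v}_i = \bm{\varphi}_i(t)$. This yields $\langle \bm{\psi}(t), \bm{\varphi}_i(t) \rangle < \pi/2$, equivalently $\bm{\psi}^{\mathrm{T}}(t)\bm{\varphi}_i(t) > 0$ for $i=1,2,\ldots,n$. Thus all bearing vectors from the agent to the true targets are strictly on the positive side of the half-plane normal $\bm{\psi}(t)$.

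Next I would translate this half-plane property to the estimated targets. An arbitrary point of $\hat{X}_{\c}(t)$ can be written as
\[
  \sum_{i=1}^{n} k_i \hat{\x}_i(t) = \y(t) + \sum_{i=1}^{n} k_i \hat{\rho}_i(t) \bm{\varphi}_i(t), \qquad k_i \geq 0,\ \sum_{i=1}^{n} k_i = 1.
\]
Taking the inner product of the displacement from $\y(t)$ with $\bm{\psi}(t)$ gives
$\sum_i k_i \hat{\rho}_i(t) \bm{\psi}^{\mathrm{T}}(t)\bm{\varphi}_i(t)$, and since $\hat{\rho}_i(t) \geq h > 0$ by Lemma~\ref{lemma:6}, every summand is non-negative and at least one (corresponding to a $k_i > 0$) is strictly positive. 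Hence the displacement is nonzero, so $\y(t) \notin \hat{X}_{\c}(t)$, which is exactly $\hat{D}(t) > 0$.

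I do not foresee a real obstacle: the argument is essentially a one-line half-plane separation once Lemma~\ref{lemma:6} and Proposition~\ref{proposition:2} are in hand. The only subtle point is to make sure that $\bm{\psi}(t)$ is defined before invoking Proposition~\ref{proposition:2}, which is why the chain Lemma~\ref{lemma:6} $\Rightarrow$ Lemma~\ref{lemma:4} $\Rightarrow$ $D(t) \geq r_{\s} > 0$ must be cited first; after that the conclusion is immediate.
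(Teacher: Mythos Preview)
Your argument is correct but proceeds differently from the paper. Both proofs begin the same way, invoking Lemma~\ref{lemma:6} and then Lemma~\ref{lemma:4} to get $D(t)\geq r_{\s}>0$. From there the paper works geometrically: it observes that the bearing cone satisfies $\langle\bm{\varphi}_l(t),\bm{\varphi}_r(t)\rangle<\pi$, introduces the auxiliary set $\hat{\x}'_i(t)=\y(t)+h\bm{\varphi}_i(t)$, computes the explicit distance $\hat{D}'(t)=h\cos(\langle\bm{\varphi}_l(t),\bm{\varphi}_r(t)\rangle/2)>0$, and then uses the monotonicity Proposition~\ref{proposition:3} (with $h\leq\hat{\rho}_i(t)$) to conclude $\hat{D}(t)\geq\hat{D}'(t)>0$. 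You instead use Proposition~\ref{proposition:2} to obtain $\bm{\psi}^{\mathrm{T}}(t)\bm{\varphi}_i(t)>0$ for every $i$, and then check directly on convex combinations that $\bm{\psi}^{\mathrm{T}}(t)\bigl(\sum_i k_i\hat{\x}_i(t)-\y(t)\bigr)=\sum_i k_i\hat{\rho}_i(t)\,\bm{\psi}^{\mathrm{T}}(t)\bm{\varphi}_i(t)>0$, which forces $\y(t)\notin\hat{X}_{\c}(t)$. Your route is more self-contained (it avoids both the auxiliary construction and Proposition~\ref{proposition:3}) and is essentially a supporting-hyperplane argument; the paper's route has the advantage of yielding a quantitative lower bound $\hat{D}(t)\geq h\cos(\langle\bm{\varphi}_l,\bm{\varphi}_r\rangle/2)$, which your inner-product estimate also gives implicitly as $\hat{D}(t)\geq h\min_i\bm{\psi}^{\mathrm{T}}(t)\bm{\varphi}_i(t)$.
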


\begin{proof}
  With Lemma~\ref{lemma:4} and Lemma~\ref{lemma:6}, we know $D(t)>r_{\s}$, thus $\langle \bm{\varphi}_{l}(t),\bm{\varphi}_{r}(t)\rangle < \pi$. Let  $\hat{\x}'_{i}(t)\triangleq \y(t)+h\bm{\varphi}_{i}(t),$ then $\hat{D}'(t)=h\cos(\langle \bm{\varphi}_{l}(t),\bm{\varphi}_{r}(t)\rangle /2)
  >0$. Since $\rho_{i}(t)\geq\hat{\rho}_{i}(t)\geq h$, with Proposition~\ref{proposition:3}, we know $\hat{D}(t)\geq \hat{D}'(t)>0.$ Hence, the conclusion of this lemma follows readily.
\end{proof}

\begin{remark}
  Based on the above analysis, we know that $D(t)>0$ and $\hat{D}(t)>0$. With Proposition~\ref{proposition:6}, we have $\hat{\rho}(t)\geq\hat{D}(t)$ and $\rho(t)\geq D(t)$. Therefore, $\bm{\varphi}_{i}(t),$ $\hat{\bm{\varphi}}(t),$ $\bm{\varphi}(t),$ $\hat{\bm{\psi}}(t),$ $\bm{\psi}(t)$ are definable.
\end{remark}

\subsubsection{Circumnavigation}\label{section:circumnavigation}
In this subsection, we prove that the agent finally performs minimum circle circumnavigation around all of the targets. The proof is divided into two steps. First, we prove that the agent will circumnavigate the minimum circle of some points with the desired enclosing distance $d$ and the tangential speed $\alpha$. Then, we prove that these points are the real targets.

Firstly, we present two lemmas that will be used in the convergence analysis of the proposed algorithm.

\begin{lemma}\label{lemma:7}
  Suppose Assumptions~\ref{assumption:D0>r},~\ref{assumption:d>r_s},~\ref{assumption:0<rhohat0<rho0} hold. Under the dynamic compensators~\eqref{equation:estimators} and the control protocol ~\eqref{equation:protocol}, if for some $i_0$, there is
  $$\lim_{t \to +\infty }\tilde{\rho}_{i_0}(t)\neq0,$$
  then $\bm\varphi_{i_0}(t)$ must converge to some unit vector.
\end{lemma}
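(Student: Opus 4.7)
The plan is to exploit Lemma~\ref{lemma:6}, which guarantees $\hat{\rho}_{i_0}(t) \geq h$ for all $t \geq 0$ so that the step-function term in~\eqref{equation:8} is inactive. This collapses the error dynamics to the scalar linear ODE $\dot{\tilde{\rho}}_{i_0}(t) = -|\omega_{i_0}(t)|\,\tilde{\rho}_{i_0}(t)$, which I would integrate explicitly to
\[
\tilde{\rho}_{i_0}(t) \;=\; \tilde{\rho}_{i_0}(0)\exp\!\left(-\int_0^t |\omega_{i_0}(\tau)|\,\mathrm{d}\tau\right).
\]
By Assumption~\ref{assumption:0<rhohat0<rho0}, $\tilde{\rho}_{i_0}(0) \leq 0$; the possibility $\tilde{\rho}_{i_0}(0) = 0$ would force $\tilde{\rho}_{i_0} \equiv 0$, contradicting the hypothesis, so $\tilde{\rho}_{i_0}(0) < 0$. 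For $\lim_{t\to+\infty}\tilde{\rho}_{i_0}(t)$ to be nonzero, the monotone integral $I(t) \triangleq \int_0^t|\omega_{i_0}(\tau)|\,\mathrm{d}\tau$ must stay bounded, i.e., $\int_0^{+\infty}|\omega_{i_0}(\tau)|\,\mathrm{d}\tau < +\infty$.

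Next I would translate this integrability into convergence of the bearing vector. From~\eqref{equation:4}, $\dot{\bm{\varphi}}_{i_0} = -\omega_{i_0}\,\bar{\bm{\varphi}}_{i_0}$, and since $\bar{\bm{\varphi}}_{i_0}$ is a unit vector, $\|\dot{\bm{\varphi}}_{i_0}(t)\| = |\omega_{i_0}(t)|$. Hence the total arclength traced by $\bm{\varphi}_{i_0}(\cdot)$ on the unit circle is finite. A standard Cauchy estimate then finishes the job: for any $T_2 > T_1 \geq 0$,
\[
\|\bm{\varphi}_{i_0}(T_2) - \bm{\varphi}_{i_0}(T_1)\| \;=\; \left\|\int_{T_1}^{T_2}\dot{\bm{\varphi}}_{i_0}(\tau)\,\mathrm{d}\tau\right\| \;\leq\; \int_{T_1}^{T_2}|\omega_{i_0}(\tau)|\,\mathrm{d}\tau,
\]
whose right-hand side tends to $0$ as $T_1 \to +\infty$. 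Thus $\{\bm{\varphi}_{i_0}(t)\}$ is Cauchy in $\mathbb{R}^2$ and converges to some limit, which has unit norm by continuity of $\|\cdot\|$, yielding the claim.

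The main obstacle is conceptual rather than technical: recognizing that a persistent estimation error is equivalent to $|\omega_{i_0}|$ being integrable, and that this same integrability controls the arc length of $\bm{\varphi}_{i_0}$ on the unit circle. Once this equivalence is identified, the proof reduces to integrating a scalar linear ODE and applying completeness of $\mathbb{R}^2$; no Lyapunov construction or LaSalle-style argument is required at this stage.
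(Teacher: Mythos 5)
Your proposal is correct and follows essentially the same route as the paper: both reduce the hypothesis, via Lemma~\ref{lemma:6} and the explicit solution $\tilde{\rho}_{i_0}(t)=\tilde{\rho}_{i_0}(0)\exp\bigl(-\int_0^t|\omega_{i_0}(\tau)|\,\mathrm{d}\tau\bigr)$, to the integrability $\int_0^{+\infty}|\omega_{i_0}(\tau)|\,\mathrm{d}\tau<+\infty$, and then deduce convergence of $\bm{\varphi}_{i_0}$. The only difference is cosmetic: the paper finishes by writing $\bm{\varphi}_{i_0}$ as a complex exponential of the (absolutely convergent) phase integral, whereas you use a direct arclength/Cauchy estimate $\|\bm{\varphi}_{i_0}(T_2)-\bm{\varphi}_{i_0}(T_1)\|\le\int_{T_1}^{T_2}|\omega_{i_0}(\tau)|\,\mathrm{d}\tau$ --- an equally valid and slightly more elementary closing step, and you also explicitly rule out $\tilde{\rho}_{i_0}(0)=0$, which the paper leaves implicit.
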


\begin{proof}
  With Lemma~\ref{lemma:6},
  considering~\eqref{equation:8}, we have $\dot{\tilde{\rho}}_{i}(t)=-|\omega_{i}|\tilde{\rho}_{i}\geq0$. Then
  $$\tilde{\rho}_i (t) = \tilde{\rho}_i(0)e^{-\!\! \int_{0}^{t} | \omega_i (\tau) | \mathrm{d} \tau}, \quad t\geq0. $$
   It means that $\tilde{\rho}_{i}(t)$ is monotone and bounded, thus it must converge. If for some $i_0$, there is $\lim_{t \to +\infty }\tilde{\rho}_{i_0}(t)\neq0$, it is obviously that $\!\! \int_{0}^{+\infty} | \omega_{i_0} (\tau) | \mathrm{d} \tau <+\infty$, thus $\!\! \int_{0}^{+\infty}  \omega_{i_0} (\tau)  \mathrm{d} \tau $ converges.
  By using complex number to represent the vectors, that is, $\varphi_{i_0}(t)$ denotes $\boldsymbol{\varphi}_{i_0}(t)$ and $- j \varphi_{i_0}(t)$ denotes $\bar{\boldsymbol{\varphi}}_{i_0}(t)$, with~\eqref{equation:4}, we have
  $$\dot{\varphi}_{i_0}(t)=j\omega_{i_0}(t)\varphi_{i_0}(t).$$
  Solving this differential equation, we have
  $$\varphi_{i_0}(t)=\varphi_{i_0}(0)e^{j\!\! \int_{0}^{t}  \omega_{i_0} (\tau)  \mathrm{d} \tau}.$$
  Since $\!\! \int_{0}^{+\infty}  \omega_{i_0} (\tau)  \mathrm{d} \tau $ converges, $\varphi_{i_0}(t)$ converges and then $\bm{\varphi}_{i_0}(t)$ converges.
   This completes the proof.
\end{proof}

\begin{lemma}\label{lemma:8}
  Suppose Assumptions~\ref{assumption:D0>r},~\ref{assumption:d>r_s},~\ref{assumption:0<rhohat0<rho0} hold. Under the dynamic compensators~\eqref{equation:estimators} and the control protocol~\eqref{equation:protocol}, each $\hat{\x}_{i}(t)$ must converge to some point.
\end{lemma}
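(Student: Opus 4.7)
The plan is to exploit the identity
\[
\hat{\x}_i(t) \;=\; \y(t) + \hat{\rho}_i(t)\phii(t) \;=\; \bigl[\y(t)+\rho_i(t)\phii(t)\bigr] + \tilde\rho_i(t)\phii(t) \;=\; \x_i + \tilde\rho_i(t)\phii(t),
\]
which reduces convergence of $\hat{\x}_i(t)$ to the joint convergence behaviour of the scalar $\tilde\rho_i(t)$ and the unit vector $\phii(t)$.

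First I would recall from Lemma~\ref{lemma:6} that $\tilde\rho_i(t)\le 0$ throughout, so~\eqref{equation:8} simplifies to $\dot{\tilde\rho}_i(t)=-|\omega_i(t)|\tilde\rho_i(t)\ge 0$. This makes $\tilde\rho_i(t)$ nondecreasing and bounded above by $0$, hence it converges to some limit $\tilde\rho_i^\star\in[\tilde\rho_i(0),0]$. Moreover, solving the linear ODE as in the proof of Lemma~\ref{lemma:7} gives
\[
\tilde\rho_i(t)=\tilde\rho_i(0)\exp\!\Bigl(-\!\int_0^t|\omega_i(\tau)|\,\mathrm{d}\tau\Bigr).
\]

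Next I would perform a simple case split on the value of $\tilde\rho_i^\star$. If $\tilde\rho_i^\star=0$, then from the identity above and the fact that $\|\phii(t)\|=1$, we immediately get $\|\hat{\x}_i(t)-\x_i\|=|\tilde\rho_i(t)|\to 0$, so $\hat{\x}_i(t)\to\x_i$. If instead $\tilde\rho_i^\star\neq 0$, Lemma~\ref{lemma:7} applies and yields a unit vector $\phii^\star$ with $\phii(t)\to\phii^\star$; combining with $\tilde\rho_i(t)\to\tilde\rho_i^\star$ gives $\hat{\x}_i(t)\to\x_i+\tilde\rho_i^\star\phii^\star$. In either case, $\hat{\x}_i(t)$ converges, which is what the lemma asserts.

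There is not really a hard step here: the work has already been done in Lemmas~\ref{lemma:6} and~\ref{lemma:7}. The only subtlety worth stating clearly is the decomposition $\hat{\x}_i(t)=\x_i+\tilde\rho_i(t)\phii(t)$ and the observation that, in the $\tilde\rho_i^\star=0$ branch, one does not need convergence of $\phii(t)$ at all, since $\phii(t)$ is uniformly bounded and $\tilde\rho_i(t)\to 0$ suffices. This is the structural reason Lemma~\ref{lemma:7} is formulated conditionally on $\tilde\rho_i^\star\neq 0$.
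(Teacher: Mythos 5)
Your proof is correct and follows essentially the same route as the paper: the decomposition $\hat{\x}_i(t)=\x_i+\tilde{\rho}_i(t)\bm{\varphi}_i(t)$, the monotone-and-bounded convergence of $\tilde{\rho}_i(t)$, and the case split on whether its limit is zero (invoking Lemma~\ref{lemma:7} only in the nonzero case) are exactly the paper's argument. Your remark that boundedness of $\bm{\varphi}_i(t)$ suffices in the zero-limit branch is a correct and slightly more explicit articulation of a point the paper leaves implicit.
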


\begin{proof}
  With Lemma~\ref{lemma:7}, we know $\tilde{\rho}_{i}(t)$ must converge. Note that $\hat{\x}_{i}(t)=\x_{i}+\tilde{\rho}_{i}(t)\bm{\varphi}_{i}(t).$ If $\lim_{t \to +\infty }\tilde{\rho}_{i}(t)=0$, $\hat{\x}_{i}(t)$ converges to $\x_{i}(t).$ If $\lim_{t \to +\infty }\tilde{\rho}_{i}(t)\neq0,$ then with Lemma~\ref{lemma:7} again, we know $\bm{\varphi}_{i}(t)$ converges to some unit vector. Thus each $\hat{\x}_{i}(t)$ converges. Hence, the conclusion of this lemma follows readily.
\end{proof}
Subsequently, we prove that the agent will circumnavigate the minimum circle of some points with the desired enclosing distance and tangential velocity.

\begin{theorem}\label{theorem:3}
  If there exists a point set $Z = \{ \boldsymbol{z}_1 , \boldsymbol{z}_2 , \cdots , \boldsymbol{z}_n \} \subset \mathbb{R}^2$ such that $\lim_{t \to + \infty} \hat{\boldsymbol{x}}_i (t) = \boldsymbol{z}_i$, we shall know
  $$\lim_{t \to +\infty }\hat{\bm{c}}(t)=\bm{c}_{\z},\quad \lim_{t \to +\infty }\hat{r}_{\t}(t)=r_{\z}.$$
  Here, $\bm{c}_{\z}$ is the center of the minimum circle of $Z$, and $r_{\z}$ is the radius of the minimum circle of $Z.$
\end{theorem}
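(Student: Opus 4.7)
The plan is to combine a compactness argument with the uniqueness of the minimum enclosing circle (a consequence of Proposition~\ref{proposition:5}). The key observation is that the map sending a finite planar point set to the center-radius pair of its minimum enclosing circle is continuous, so once $\hat{\x}_i(t)\to \z_i$ for each $i$, the pair $(\hat{\bm c}(t),\hat r_{\t}(t))$ has no choice but to converge to $(\bm c_{\z},r_{\z})$.

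First I would establish boundedness. Since each $\hat{\x}_i(t)$ converges, the set $\hat X(t)$ is eventually contained in a fixed compact ball; by Proposition~\ref{proposition:6} the center $\hat{\bm c}(t)$ lies in $\conv(\hat X(t))$, so $\hat{\bm c}(t)$ is bounded, and $\hat r_{\t}(t)$ is bounded by the diameter of $\hat X(t)$. Hence $(\hat{\bm c}(t),\hat r_{\t}(t))$ lives in a compact subset of $\mathbb{R}^2\times\mathbb{R}_{\geq 0}$.

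Next I would argue that every limit point of $(\hat{\bm c}(t),\hat r_{\t}(t))$ as $t\to +\infty$ equals $(\bm c_{\z},r_{\z})$. Given any sequence $t_k\to+\infty$, extract a subsequence along which $(\hat{\bm c}(t_k),\hat r_{\t}(t_k))\to(\bm c^{*},r^{*})$. Passing to the limit in $\|\hat{\x}_i(t_k)-\hat{\bm c}(t_k)\|\le \hat r_{\t}(t_k)$ yields $\|\z_i-\bm c^{*}\|\le r^{*}$, so the closed disk centered at $\bm c^{*}$ of radius $r^{*}$ encloses $Z$, hence $r^{*}\ge r_{\z}$ by minimality. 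For the reverse inequality, fix $\varepsilon>0$; for $t$ large enough $\|\hat{\x}_i(t)-\z_i\|<\varepsilon$ for all $i$, so the disk centered at $\bm c_{\z}$ of radius $r_{\z}+\varepsilon$ encloses $\hat X(t)$, hence $\hat r_{\t}(t)\le r_{\z}+\varepsilon$. Taking $t=t_k\to\infty$ and then $\varepsilon\to 0$ gives $r^{*}\le r_{\z}$, so $r^{*}=r_{\z}$. Since the disk centered at $\bm c^{*}$ of radius $r_{\z}$ encloses $Z$, uniqueness of the minimum enclosing circle (the geometric content of Proposition~\ref{proposition:5}: two distinct closed disks of equal radius $r_{\z}$ enclosing $Z$ would have a lens-shaped intersection contained in a strictly smaller disk) forces $\bm c^{*}=\bm c_{\z}$.

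Finally, since every subsequence of $(\hat{\bm c}(t),\hat r_{\t}(t))$ admits a further subsequence converging to the same limit $(\bm c_{\z},r_{\z})$, the whole trajectory converges to $(\bm c_{\z},r_{\z})$, which is the claim. The main obstacle is the uniqueness step, which on the face of it is not stated explicitly as a standalone lemma in the paper; however Proposition~\ref{proposition:5} supplies exactly the structural description (either the circumscribed circle of three points or the diametral circle of two points) needed to rule out two distinct minimum enclosing circles of equal radius, and the rest of the argument is routine continuity and compactness.
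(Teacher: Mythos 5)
Your proof is correct, but it takes a genuinely different route from the paper's. The paper invokes Proposition~\ref{proposition:5} to split into cases according to whether the minimum circle of $\hat{X}(t)$ is determined by two points (center is a midpoint) or three points (center is a circumcenter), and then passes to the limit in the explicit formula for the center; this implicitly assumes that the same subset of indices determines the minimum circles of $\hat{X}(t)$ and of $Z$ and that this subset is eventually fixed in $t$, neither of which is justified (the determining subset can change as $t$ varies). Your argument avoids this entirely: boundedness of $(\hat{\bm{c}}(t),\hat{r}_{\t}(t))$ via Proposition~\ref{proposition:6}, the two-sided squeeze $r_{\z}\le r^{*}\le r_{\z}$ on any subsequential limit, and uniqueness of the minimum enclosing disk to pin down $\bm{c}^{*}=\bm{c}_{\z}$ amount to proving that the map from a finite point set to its minimum-circle data is continuous, which is exactly what the theorem asserts. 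The only external fact you need is uniqueness, which the paper states without proof in a remark; your lens argument (the intersection of two distinct disks of radius $r_{\z}$ both containing $Z$ lies in a disk of radius $\sqrt{r_{\z}^{2}-\|\bm{c}^{*}-\bm{c}_{\z}\|^{2}/4}<r_{\z}$) supplies it and does not actually require Proposition~\ref{proposition:5} at all, so your attribution of uniqueness to that proposition is the one cosmetic inaccuracy. On balance your proof is more robust than the paper's and would be a sound replacement for it.
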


\begin{proof}
  Since $\lim_{t \to +\infty }\hat{\x}_{i}(t)=\boldsymbol{z}_i$, for any $\varepsilon>0$, there exists $t_{1}>0$ such that $\|\hat{\x}_{i}(t)-\boldsymbol{z}_i\|<\varepsilon,$ for $t>t_{1}.$  Thus when $t>t_1$, with Proposition~\ref{proposition:5}, we know that the center of the minimum circle can be divided into the following two cases.\par
  \medskip
  \emph{Case 1.}
  The center of the circle is the midpoint of the longest side. We know $\lim_{t \to +\infty }\hat{\bm{c}}(t)=\lim_{t \to +\infty }(\hat{\x}_{h}(t)+\hat{\x}_{j}(t))/2$, where $\hat{\x}_{h}(t)$, $\hat{\x}_{j}(t)$ are two points used to determine the estimated minimum circle. Similarly, $\bm{c}_{\z}=(\boldsymbol{z}_{h}+\boldsymbol{z}_{j})/2$, then
  \begin{eqnarray} \notag
  \begin{aligned}
  &
  \|\hat{\bm{c}}(t)-\bm{c}_{\z}\|=\frac{1}{2}\|\hat{\x}_{h}(t)+\hat{\x}_{j}(t)-\boldsymbol{z}_{h}-\boldsymbol{z}_{j}\|
  \\&\leq\frac{1}{2}(\|\hat{\x}_{h}(t)-\boldsymbol{z}_{h}\|+\|\hat{\x}_{j}(t)-\boldsymbol{z}_{j}\|)<\varepsilon.
  \end{aligned}
  \end{eqnarray}
  Therefore, $\|\hat{\bm{c}}(t)-\bm{c}_{\z}\|<\varepsilon$ for $t>t_{1}$, which implies that $\lim_{t \to + \infty }\hat{\bm{c}}(t)=\bm{c}_{\z}.$ It is obviously that $\lim_{t \to +\infty }\hat{r}_{\t}(t)=r_{\z}.$\par
  \medskip
  \emph{Case 2.}
  The center of the circle is the incenter of a triangle. According to the definition of the incenter of a triangle, $\lim_{t \to +\infty }\|\hat{\bm{c}}(t)-\hat{\x}_{k}(t)\|=\lim_{t \to +\infty }\|\hat{\bm{c}}(t)-\hat{\x}_{l}(t)\|=\lim_{t \to +\infty }\|\hat{\bm{c}}(t)-\hat{\x}_{p}(t)\|$, where $\hat{\x}_{k}(t)$, $\hat{\x}_{l}(t)$, $\hat{\x}_{p}(t)$ are three points used to determine the estimated minimum circle. Then,
  \begin{eqnarray} \notag
  \begin{aligned}
  &
  \|\hat{\bm{c}}(t)-\boldsymbol{z}_{k}\|-\|\hat{\x}_{k}(t)-\boldsymbol{z}_{k}\|
  \\&\leq\|\hat{\x}_{k}(t)-\boldsymbol{z}_{k}-(\hat{\bm{c}}(t)-\boldsymbol{z}_{k})\|=\|\hat{\x}_{k}(t)-\hat{\bm{c}}(t)\|
  \\&
    \leq\|\hat{\bm{c}}(t)-\boldsymbol{z}_{k}\|+\|\hat{\x}_{k}(t)-\boldsymbol{z}_{k}\|.
  \end{aligned}
  \end{eqnarray}
  According to $\lim_{t \to +\infty }\|\hat{\x}_{k}(t)-\boldsymbol{z}_{k}\|=0$, we know $\lim_{t \to +\infty }\|\hat{\bm{c}}(t)-\boldsymbol{z}_{k}\|=\lim_{t \to +\infty }\|\hat{\bm{c}}(t)-\hat{\x}_{k}(t)\|$. Similarly, $\lim_{t \to + \infty }\|\hat{\bm{c}}(t)-\boldsymbol{z}_{l}\|=\lim_{t \to +\infty }\|\hat{\bm{c}}(t)-\hat{\x}_{l}(t)\|$, $\lim_{t \to +\infty }\|\hat{\bm{c}}(t)-\boldsymbol{z}_{p}\|=\lim_{t \to +\infty }\|\hat{\bm{c}}(t)-\hat{\x}_{p}(t)\|.$ So
  $$\lim_{t \to +\infty }\|\hat{\bm{c}}(t)-\boldsymbol{z}_{k}\|=\lim_{t \to +\infty }\|\hat{\bm{c}}(t)-\boldsymbol{z}_{l}\|=\lim_{t \to +\infty }\|\hat{\bm{c}}(t)-\boldsymbol{z}_{p}\|.$$ Thus $\hat{\bm{c}}(t)$ is also the incenter of the triangle by $\boldsymbol{z}_{k}$, $\boldsymbol{z}_{l}$ and $\boldsymbol{z}_{p}$, which implies that $\hat{\bm{c}}(t)$ is also the center of the minimum circle of $Z$. Therefore, $\lim_{t \to +\infty }\hat{\bm{c}}(t)=\bm{c}_{\z}.$ Then, it is obviously that $\lim_{t \to +\infty }\hat{r}_{\t}(t)=r_{\z}.$\par
  Therefore, the conclusion of this lemma follows readily.
\end{proof}

\begin{theorem}\label{theorem:4}
  Suppose Assumptions~\ref{assumption:D0>r},~\ref{assumption:d>r_s},~\ref{assumption:0<rhohat0<rho0} hold. Under the dynamic compensators~\eqref{equation:estimators} and the control protocol~\eqref{equation:protocol}, the agent circumnavigates the minimum circle of $Z$ with the desired distance $d$ and tangential speed $\alpha$.
\end{theorem}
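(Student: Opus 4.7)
The plan is to leverage the convergence results already established in Lemma~\ref{lemma:8} and Theorem~\ref{theorem:3}, and then analyse the closed-loop motion in coordinates anchored at the (now fixed) limiting centre $\bm{c}_{\z}$. To avoid differentiating the time-varying $\hat{\bm{c}}(t)$ (whose smoothness is questionable when the combinatorial subset of circle-determining estimated targets switches), I introduce the auxiliary quantities $\rho_{\z}(t)\triangleq\|\y(t)-\bm{c}_{\z}\|$, $\bm{\varphi}_{\z}(t)\triangleq(\bm{c}_{\z}-\y(t))/\rho_{\z}(t)$, and $\bar{\bm{\varphi}}_{\z}(t)$ the clockwise $\pi/2$ rotation of $\bm{\varphi}_{\z}(t)$. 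The two claims to be proved become $\rho_{\z}(t)\to r_{\z}+d$ and $\bar{\bm{\varphi}}_{\z}^{\mathrm{T}}(t)\dot{\y}(t)\to\alpha$. From Lemma~\ref{lemma:8} each $\hat{\bm{x}}_i(t)\to\bm{z}_i$, so Theorem~\ref{theorem:3} gives $\hat{\bm{c}}(t)\to\bm{c}_{\z}$ and $\hat{r}(t)\to r_{\z}+d$; the reverse triangle inequality then yields $\hat{\rho}(t)-\rho_{\z}(t)\to0$ and, wherever $\rho_{\z}$ is bounded away from zero, $\hat{\bm{\varphi}}(t)\to\bm{\varphi}_{\z}(t)$ and $\hat{\bar{\bm{\varphi}}}(t)\to\bar{\bm{\varphi}}_{\z}(t)$; in particular $\bm{\varphi}_{\z}^{\mathrm{T}}\hat{\bar{\bm{\varphi}}}\to0$ and $\bar{\bm{\varphi}}_{\z}^{\mathrm{T}}\hat{\bm{\varphi}}\to0$ by orthogonality.

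Next, I would differentiate $\rho_{\z}^{2}=(\y-\bm{c}_{\z})^{\mathrm{T}}(\y-\bm{c}_{\z})$ to obtain $\dot{\rho}_{\z}=-\bm{\varphi}_{\z}^{\mathrm{T}}\dot{\y}$, substitute the control protocol~\eqref{equation:protocol}, and use the limits above to rewrite
\begin{equation*}
\dot{\rho}_{\z}(t)=-k\bigl[\rho_{\z}(t)-(r_{\z}+d)\bigr]+\epsilon(t),
\end{equation*}
where $\epsilon(t)$ collects all the vanishing terms (the error $\hat{\rho}-\rho_{\z}$, the mismatch $\hat{r}-(r_{\z}+d)$, the deviation $\bm{\varphi}_{\z}^{\mathrm{T}}\hat{\bm{\varphi}}-1$, and the cross term $\alpha\,u(\cdot)\bm{\varphi}_{\z}^{\mathrm{T}}\hat{\bar{\bm{\varphi}}}$) and satisfies $\epsilon(t)\to0$ with $|\epsilon(t)|$ uniformly bounded (the step factor lies in $\{0,1\}$ and $\alpha,k$ are fixed). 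For $V(t)=(\rho_{\z}(t)-r_{\z}-d)^{2}/2$ one then obtains $\dot{V}\leq-2kV+|\epsilon(t)|\sqrt{2V}$, which forces $V\to0$ and hence $\rho_{\z}(t)\to r_{\z}+d$. The preliminary lower bound on $\rho_{\z}$ needed for $\bm{\varphi}_{\z}$ to be well defined comes from the observation that whenever $\rho_{\z}$ becomes small, $\hat{\rho}$ is small too (since $|\hat{\rho}-\rho_{\z}|\leq\|\hat{\bm{c}}-\bm{c}_{\z}\|$ is small for large $t$), hence $\hat{\rho}<\hat{r}$ and the radial term of~\eqref{equation:protocol} points away from $\hat{\bm{c}}$, which for large $t$ is close to $\bm{c}_{\z}$; this gives $\dot{\rho}_{\z}>0$ and confines $\rho_{\z}$ above a positive threshold.

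Once $\rho_{\z}(t)\to r_{\z}+d$, Assumption~\ref{assumption:d>r_s} gives $\hat{\rho}(t)-\hat{r}_{\t}(t)-r_{\s}\to d-r_{\s}>0$, so the step factor $u(\hat{\rho}(t)-\hat{r}_{\t}(t)-r_{\s})$ is identically $1$ for all sufficiently large $t$. Projecting~\eqref{equation:protocol} onto $\bar{\bm{\varphi}}_{\z}$,
\begin{equation*}
\bar{\bm{\varphi}}_{\z}^{\mathrm{T}}(t)\dot{\y}(t)=k[\hat{\rho}(t)-\hat{r}(t)]\,\bar{\bm{\varphi}}_{\z}^{\mathrm{T}}(t)\hat{\bm{\varphi}}(t)+\alpha\,\bar{\bm{\varphi}}_{\z}^{\mathrm{T}}(t)\hat{\bar{\bm{\varphi}}}(t),
\end{equation*}
the first term is $o(1)$ (since $\hat{\rho}-\hat{r}\to0$ multiplies a bounded quantity) and the second tends to $\alpha$ (since $\bar{\bm{\varphi}}_{\z}^{\mathrm{T}}\hat{\bar{\bm{\varphi}}}\to1$), so $\bar{\bm{\varphi}}_{\z}^{\mathrm{T}}\dot{\y}\to\alpha$.

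The main obstacle is the middle step: the compensator outputs $\hat{\bm{c}}(t)$ and $\hat{r}_{\t}(t)$ are only continuous (not smooth, because the combinatorial set of circle-determining estimated targets identified in Proposition~\ref{proposition:5} can switch along the trajectory) and the control contains a discontinuous step term, so a Lyapunov function written directly in terms of $\hat{\rho}-\hat{r}$ is awkward to differentiate. Anchoring the Lyapunov candidate at the fixed limit $\bm{c}_{\z}$ pushes all the pathological time dependence into a uniformly bounded, vanishing perturbation $\epsilon(t)$, so that a routine scalar-ODE/ISS argument closes the loop; verifying uniform boundedness of $\epsilon(t)$ and the lower bound on $\rho_{\z}$ is the only non-automatic part of that reduction.
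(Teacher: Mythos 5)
Your proposal follows essentially the same route as the paper: both anchor the analysis at the fixed limit $\bm{c}_{\z}$, rewrite $\dot{\rho}_{\z}$ as $-k[\rho_{\z}-(r_{\z}+d)]$ plus a vanishing perturbation built from $\hat{r}-l_{\z}$, $\hat{\rho}-\rho_{\z}$, $\hat{\bm{\varphi}}-\bm{\varphi}_{\z}$ and $\hat{\bar{\bm{\varphi}}}-\bar{\bm{\varphi}}_{\z}$, and then project onto $\bar{\bm{\varphi}}_{\z}$ after arguing the step factor is eventually $1$. Your additions (the explicit ISS-type scalar argument and the lower bound keeping $\rho_{\z}$ away from zero so that $\bm{\varphi}_{\z}$ and the perturbation bounds are valid) are details the paper leaves implicit rather than a different approach.
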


\begin{proof}
  Define $l_{\z}\triangleq r_{\z}+d,$ $\rho_{\z}(t)\triangleq \|\y(t)-\bm{c}_{\z}\|,$ $\bm{\varphi}_{\z}(t)\triangleq [ \bm{c}_{\z} - \boldsymbol{y}(t) ] / \rho_{\z}(t)$ and $\bar{\bm{\varphi}}_{\z}(t)$ denotes $\bm{\varphi}_{\z}(t)$ as it rotates $\pi/2$ clockwise.
  Let $e(t)\triangleq \hat{r}(t)-l_{\z},$ $w(t)\triangleq \hat{\rho}(t)-\rho_{\z}(t),$
  $\bm{\delta}(t)\triangleq \hat{\bm{\varphi}}(t)-\bm{\varphi}_{\z}(t)$ and $\bar{\bm{\delta}}(t)\triangleq \hat{\bar{\bm{\varphi}}}(t)-\bar{\bm{\varphi}}_{\z}(t).$ With Theorem~\ref{theorem:3}, we know $\lim_{t \to +\infty }\hat{r}(t)=l_{\z}.$ Obviously when $\lim_{t \to +\infty }\hat{\bm{c}}(t)=\bm{c}_{\z}$, we have
  \begin{equation}\label{equation:9}
     \lim_{t \to +\infty }e(t)=0,\ \lim_{t \to +\infty }w(t)=0,\ \lim_{t \to +\infty }\bm{\delta}(t)=\bm{0},\ \lim_{t \to +\infty }\bar{\bm{\delta}}(t)=\bm{0}.
  \end{equation}
  \noindent
  Similar to Proposition~\ref{proposition:4}, one can conclude that
  \begin{equation*}
  \dot{\rho}_{\z}(t)=-\bm{\varphi}^{\mathrm{T}}_{\z}(t)\dot{\y}(t)=-k\rho_{\z}(t)+kl_{\z}+g(t),
  \end{equation*}
  where $g(t)\triangleq k(e(t)-w(t))+k(l_{\z}+e(t)-w(t)-\rho_{\z}(t))\bm{\varphi}^{\mathrm{T}}_{\z}(t)\bm{\delta}(t)-\alpha u(\hat{\rho}(t)-\hat{r}_{\t}(t)-r_{\s})\bm{\varphi}^{\mathrm{T}}_{\z}(t)\bar{\bm{\delta}}(t).$
  With~\eqref{equation:9}, $\lim_{t \to +\infty }g(t)=0.$ Thus we have $\lim_{t \to +\infty }\rho_{\z}(t)=l_{\z}$, which also means that the distance between the agent and the minimum circle of $Z$ converges to $d$.\par
  Note that
  \begin{equation*}
    \bar{\bm{\varphi}}_{\z}^{\mathrm{T}}(t)\dot{\y}(t)=\alpha u(\hat{\rho}(t)-\hat{r}_{\t}(t)-r_{\s})+p(t),
  \end{equation*}
  where $p(t)\triangleq k[\hat{\rho}(t)-\hat{r}(t)]\bar{\bm{\varphi}}^{\mathrm{T}}_{\z}(t)\bm{\delta}(t)+\alpha u(\hat{\rho}(t)-\hat{r}_{\t}(t)-r_{\s})\bar{\bm{\varphi}}^{\mathrm{T}}_{\z}(t)\bar{\bm{\delta}}(t).$
  With~\eqref{equation:9}, we know $\lim_{t \to +\infty }\hat{\rho}(t)=\lim_{t \to +\infty }\rho_{\z}(t)=l_{\z}=\lim_{t \to +\infty }\hat{r}(t)=\lim_{t \to +\infty }\hat{r}_{\t}(t)+d$. Considering $d>r_{\s}$, there exists $t_{2}>0$, such that $\hat{\rho}(t)-\hat{r}_{\t}(t)-r_{\s}>0$ for $t>t_{2}$. Therefore, when $t>t_{2}$, $u(\hat{\rho}(t)-\hat{r}_{\t}(t)-r_{\s})=1$, and then one gets  $\bar{\bm{\varphi}}^{\mathrm{T}}_{\z}(t)\dot{\y}(t)=\alpha+p(t)$.
  Recalling $\lim_{t \to +\infty }p(t)=0$, we have $\lim_{t \to +\infty }\bar{\bm{\varphi}}^{\mathrm{T}}_{\z}(t)\dot{\y}(t)=\alpha$.
  It implies that the agent encircles the minimum circle of $Z$ with the desired tangential speed $\alpha$.
  The proof is then completed.
\end{proof}

Finally, we prove that the outputs of the dynamic compensator shall converge to the real targets and the agent performs the safe minimum circle circumnavigation around the real targets.

\begin{theorem}
  Suppose Assumptions~\ref{assumption:D0>r},~\ref{assumption:d>r_s},~\ref{assumption:0<rhohat0<rho0} hold. Under the dynamic compensators~\eqref{equation:estimators} and the control protocol~\eqref{equation:protocol}, the agent circumnavigates the minimum circle of the real targets with the desired distance $d$ and tangential speed $\alpha$.
\end{theorem}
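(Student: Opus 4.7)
The plan is to combine Lemma~\ref{lemma:8} and Theorem~\ref{theorem:4} with a contradiction argument that uses Lemma~\ref{lemma:7} to identify the limit set of the compensator outputs with the real targets. By Lemma~\ref{lemma:8}, each $\hat{\bm{x}}_i(t)$ converges to some point $\bm{z}_i$; let $Z\triangleq\{\bm{z}_1,\ldots,\bm{z}_n\}$. Theorem~\ref{theorem:4} then guarantees that the agent circumnavigates the minimum circle of $Z$ with the desired distance $d$ and tangential speed $\alpha$. Hence the statement will follow once I establish $\bm{z}_i=\bm{x}_i$ for every $i$ --- equivalently, $\lim_{t\to+\infty}\tilde{\rho}_i(t)=0$ --- because then the minimum circle of $Z$ coincides with that of $X$, so that the conclusion of Theorem~\ref{theorem:4} reads $\lim\rho(t)=r$ and $\lim\bar{\bm{\varphi}}^{\mathrm{T}}(t)\dot{\bm{y}}(t)=\alpha$, which is precisely the claim.

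To prove $\tilde{\rho}_i(t)\to 0$ I would argue by contradiction. Suppose $\lim_{t\to+\infty}\tilde{\rho}_{i_0}(t)\neq 0$ for some $i_0$. Lemma~\ref{lemma:7} yields $\bm{\varphi}_{i_0}(t)\to \bm{\varphi}^{*}$ for some unit vector $\bm{\varphi}^{*}$. From the identity $\bm{y}(t)=\bm{x}_{i_0}-\rho_{i_0}(t)\bm{\varphi}_{i_0}(t)$, together with the boundedness of $\rho_{i_0}(t)$ (Theorem~\ref{theorem:4} confines the agent to a bounded orbit while the target is fixed), the agent's position becomes arbitrarily close to the straight line $L\triangleq\{\bm{x}_{i_0}-s\bm{\varphi}^{*}\colon s\in\mathbb{R}\}$ through $\bm{x}_{i_0}$. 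Meanwhile, Theorem~\ref{theorem:4} also delivers $\rho_{\z}(t)\to l_{\z}=r_{\z}+d>0$, so $\bm{y}(t)$ is asymptotically on the circle $C$ of radius $l_{\z}$ centred at $\bm{c}_{\z}$. Since $L\cap C$ contains at most two points, the agent's trajectory is eventually trapped in arbitrarily small neighbourhoods of these (at most two) intersection points.

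The contradiction will be closed using the magnitude of $\dot{\bm{y}}$. Evaluating the control protocol~\eqref{equation:protocol}, using $\hat{\rho}(t)-\hat{r}(t)\to 0$ and $u(\hat{\rho}(t)-\hat{r}_{\t}(t)-r_{\s})\to 1$ (the latter because $\hat{\rho}-\hat{r}_{\t}\to d>r_{\s}$ by Assumption~\ref{assumption:d>r_s}), and exploiting the orthogonality of $\hat{\bm{\varphi}}$ and $\hat{\bar{\bm{\varphi}}}$, one obtains $\|\dot{\bm{y}}(t)\|\to\alpha>0$. A trajectory whose speed is bounded below by a positive constant cannot stay confined to arbitrarily small neighbourhoods of a finite set of points --- contradiction. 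Therefore $\lim_{t\to+\infty}\tilde{\rho}_i(t)=0$ for every $i$, which gives $\bm{z}_i=\bm{x}_i$, and Theorem~\ref{theorem:4} immediately transfers to the real target set $X$.

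The step I expect to be the main obstacle is making the geometric ``line-meets-circle'' argument fully rigorous: quantifying how close $\bm{y}(t)$ is forced to $L\cap C$ given its separate closeness to $L$ and to $C$, and then combining this with $\|\dot{\bm{y}}\|\to\alpha>0$ to rule out confinement near a finite point set. Everything else is a routine application of the already-established Lemma~\ref{lemma:7}, Lemma~\ref{lemma:8} and Theorem~\ref{theorem:4}.
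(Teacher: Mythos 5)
Your proposal follows the same route as the paper's proof: by Lemma~\ref{lemma:8} the compensator outputs converge to some set $Z$, Theorem~\ref{theorem:4} gives circumnavigation of the minimum circle of $Z$ with distance $d$ and tangential speed $\alpha$, and a contradiction built on Lemma~\ref{lemma:7} forces $\tilde{\rho}_i(t)\to 0$, hence $Z=X$ and Theorem~\ref{theorem:4} transfers to the real targets. In fact you supply more detail than the paper, which simply asserts that circumnavigation makes it ``obvious'' that $\bm{\varphi}_{i_0}(t)$ cannot converge; your line-meets-circle localization of $\bm{y}(t)$ near at most two points is the right way to make that assertion precise.

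The one genuine flaw is in how you close the contradiction: the claim that a trajectory whose speed is bounded below by a positive constant cannot remain confined to arbitrarily small neighbourhoods of a finite point set is false --- uniform motion around a circle of radius $\varepsilon$ has constant speed $\alpha$ yet stays in a ball of radius $\varepsilon$ forever. So $\|\dot{\bm{y}}(t)\|\to\alpha$ alone does not finish the argument. The repair is immediate from what Theorem~\ref{theorem:4} already gives you: since $\bar{\bm{\varphi}}_{\z}^{\mathrm{T}}(t)\dot{\bm{y}}(t)\to\alpha$ and $\rho_{\z}(t)\to l_{\z}>0$, the polar angle $\theta(t)$ of $\bm{y}(t)$ about $\bm{c}_{\z}$ satisfies $\dot{\theta}(t)=\bar{\bm{\varphi}}_{\z}^{\mathrm{T}}(t)\dot{\bm{y}}(t)/\rho_{\z}(t)\to\alpha/l_{\z}>0$, whence $\theta(t)\to+\infty$ and the agent sweeps through every angular sector about $\bm{c}_{\z}$ infinitely often. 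That is incompatible with $\bm{y}(t)$ eventually lying in small neighbourhoods of the (at most two) points of $L\cap C$, each of which subtends only a small angular sector. With this substitution your argument is complete, and it is a legitimate, more rigorous rendering of the step the paper leaves implicit.
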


\begin{proof}
    The proof is divided into two steps.
\par
  \emph{Step 1.}
  We prove that the localization tasks are achieved.
  \\
  \noindent
  Suppose there exists $i_0$, such that $\lim_{t \to +\infty }\tilde{\rho}_{i_0}(t)\neq 0$. Then with Lemma~\ref{lemma:7}, we know $\bm{\varphi}_{i_0}(t)$ must converge. With Theorem~\ref{theorem:4}, it is obvious that the agent must circumnavigate the minimum circle of $Z$ with the desired enclosing distance and tangential speed. Thus $\bm{\varphi}_{i_0}(t)$ is not converging. Contradiction appears. Therefore, we have $\lim_{t \to +\infty }\tilde{\rho}_{i}(t)=0,$ for $i=1,2,...,n$.
  \par
  \emph{Step 2.}
  We prove that the agent circumnavigates the minimum circle of the real targets with the desired distance $d$ and tangential speed $\alpha$.
  \\
  \noindent With step 1, we have $\lim_{t \to + \infty} \hat{\boldsymbol{x}}_i (t) = \boldsymbol{x}_i$, thus $X=Z.$ With Theorem~\ref{theorem:4}, we know that the conclusion of this theorem obviously holds.
\end{proof}

\section{Extension to multiple agents}\label{section:extension}
In this section, we will show the aforementioned algorithm can be extended for the multiple agents case. Inspired by \cite{24,25}, for the multiple agents' case, it is required that the agents not only converge to the desired shaped orbit but also maintain an equiangular spaced formation around the targets.

However, different from \cite{24,25}, this paper studies a multi-target circumnavigation problem. We do not know the direction of the agent pointing to the desired orbital center.
Therefore, we apply the direction pointing to the center of the minimum circle of the estimated targets instead. We define $\theta_{i}$ as the angle of $\hat{\bm{\varphi}}^{\hat{\c}_{i}}_{i},$ where $\hat{\bm{\varphi}}^{\hat{\c}_{i}}_{i}$  represents the unit vector pointing from agent $i$ to the minimum circle center estimated by agent $i$. The reference axis of $\theta_{i}$ is globally fixed in this paper.

 Convert the variables to local versions by adding the subscript $i$. To be more practical, it is assumed that the agents cannot detect each other unless their distance is less than a threshold distance $M$. We denote the neighbor set of agent $i$ as $\mathcal{N}_{i}= \{j\mid\|\y_{j}-\y_{i}\|\leq M, j\neq i\}$.
The bearing angle of agent $j$ from agent $i$ as $\theta^{j}_{i}$ being the angle of $(\y_{j}-\y_{i}), j\in \mathcal{N}_{i}$ and wrap $\{\theta_{i}-\theta^{j}_{i}, j\in \mathcal{N}_{i}\}$ to the interval $\left[-\pi,\pi \right)$. Then let $i^{+}= \{j\mid\max\{\theta_{i}-\theta^{j}_{i}\}, j\in\mathcal{N}_{i}\}$ and $i^{-}= \{j\mid\min\{\theta_{i}-\theta^{j}_{i}\}, j\in\mathcal{N}_{i}\}$.
It is worth pointing out that $i^+$ and $i^{-}$ can be determined only using bearings without communication.
Suppose that positions of $\hat{\c}_{i^{+}}$ and $\hat{\c}_{i^{-}}$ are available to agent $i$ from its neighboring agent $i^{+}$ and $i^{-}$ via communication respectively.
In addition, as long as the neighbor agents of agent $i$ are within its detection range, the agent $i$ can obtain the location of its neighbor agents.
Then we define

\[\sigma^{+}_{i}=\left\{\begin{array}{ll}

\arccos(\hat{\bm{\varphi}}^{\hat{\c}_{i^{+}} \mathrm{T}}_{i^{+}}\hat{\bm{\varphi}}^{\hat{\c}_{i}}_{i})&\text{if $\mathcal{N}^{+}_{i} \neq \emptyset$},\\

\pi&

\text{if $\mathcal{N}^{+}_{i} = \emptyset$},

\end{array}\right.\]
and
\[\sigma^{-}_{i}=\left\{\begin{array}{ll}

\arccos(\hat{\bm{\varphi}}^{\hat{\c}_{i^{-}} \mathrm{T}}_{i^{-}}\hat{\bm{\varphi}}^{\hat{\c}_{i}}_{i})&\text{if $\mathcal{N}^{-}_{i} \neq \emptyset$},\\

\pi&

\text{if $\mathcal{N}^{-}_{i} = \emptyset$},

\end{array}\right.\]
\noindent
where $\mathcal{N}^{+}_{i}=\{j\in\mathcal{N}_{i}\mid \theta_{i}-\theta^{j}_{i}\in(0,\pi)\}$ and $\mathcal{N}^{-}_{i}=\{j\in\mathcal{N}_{i}\mid \theta_{i}-\theta^{j}_{i}\in(-\pi,0)\}$.

To achieve circumnavigation of multiple agents for the targets at equiangular spaced formation on the same circular orbit, in fact, only the tangential velocity term in ~\eqref{equation:protocol} needs to be modified. Thus, we define the new control protocol as
\begin{equation}\label{equation:newprotocol}
\dot{\y}_{i}(t)=k[\hat{\rho}^{\hat{\c}_{i}}_{i}(t)-\hat{r}_{i}(t)]\hat{\bm{\varphi}}^{\hat{\c}_{i}}_{i}(t)+f(\sigma^{+}_{i}/\sigma^{-}_{i})\alpha u(\hat{\rho}^{\hat{\c}_{i}}_{i}(t)-\hat{r}_{i,\t}(t)-r_{\s})\hat{\bar{\bm{\varphi}}}^{\hat{\c}_{i}}_{i}(t),
\end{equation}

\noindent where $\hat{\rho}^{\hat{\c}_{i}}_{i}(t)$ represents the distance between the agent $i$ and the center of the minimum circle estimated by it, $\hat{r}_{i}(t)$ represents the desired orbit radius estimated by agent $i$, $\hat{r}_{i,\t}(t)$ is the minimum circle radius estimated by agent $i$, and $f : \left(0,+\infty \right]\rightarrow\left(0,1 \right]$ is any strictly increasing continuous function. For instance, we can let $f(x)=1-\exp(-x)$.

Define the angle of $-\hat{\bm{\varphi}}^{\hat{\c}_{i}}_{i}$ as $\phi_{i}=\theta_{i}+\pi$ mod $2\pi$. For convenience, we label agents at time $t$ according to their positions in a counterclockwise radial order around the targets, i.e., $j>k\Leftrightarrow\phi_{j}>\phi_{k}.$ We also define $\delta\phi_{i}=\phi_{i+1}-\phi_{i}$ for $i\in\{1,...,n-1\}$ $(n\geq3)$ and define $\delta\phi_{n}=\phi_{1}+2\pi-\phi_{n}.$ Under this labeling, $\delta\phi_{i}\in\left[0,2\pi \right)$ for all $i$.

\begin{proposition}
  Suppose Assumptions~\ref{assumption:D0>r},~\ref{assumption:d>r_s},~\ref{assumption:0<rhohat0<rho0} hold. Under the dynamic compensators~\eqref{equation:estimators} and the new control protocol~\eqref{equation:newprotocol}, the equilibrium point given by $\{\hat{\rho}^{\hat{\c}_{i}}_{i}=r, \hat{\c}_{i}=\c, \delta\phi_{i}=2\pi/n,\forall i\}$ is asymptotically stable if $M\geq2r\sin(\pi/n)$.
\end{proposition}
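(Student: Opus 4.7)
The plan is to split the proof into a radial part (handled directly by applying the single-agent theorems to each agent in parallel) and an angular part (the phase-spacing dynamics), and to glue them with a cascade / vanishing-perturbation argument.

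First I would invoke Theorem~\ref{theorem:4} for each agent independently. The radial term of~\eqref{equation:newprotocol} is identical to that of~\eqref{equation:protocol}, and the tangential coefficient $f(\sigma_i^+/\sigma_i^-)$ lies in $(0,1]$; since the safety and localization arguments in Lemmas~\ref{lemma:4}--\ref{lemma:8} and Theorems~\ref{theorem:3}--\ref{theorem:4} only use that the tangential speed is bounded and eventually bounded away from zero (this is what produces the persistent rotational excitation exploited in Lemma~\ref{lemma:7}), the same chain of arguments applies to each agent. This yields $\tilde{\rho}_{i,j}(t)\to 0$ for every target $j$, and hence $\hat{\c}_i(t)\to\c$, $\hat{r}_{i,\t}(t)\to r_{\t}$ and $\hat{\rho}_i^{\hat{\c}_i}(t)\to r$, locking in every coordinate of the candidate equilibrium except the angular distribution.

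Next I would localize to a neighborhood of the equiangular configuration and reduce the dynamics to the gaps. On the limiting orbit the distance between cyclically adjacent agents is $2r\sin(\pi/n)$, so the hypothesis $M\geq 2r\sin(\pi/n)$ ensures mutual detection, and continuity extends this to a neighborhood of the equilibrium. In that neighborhood the labels $i^{+}$ and $i^{-}$ coincide with the cyclic successor and predecessor in the angular order, and, as $\hat{\c}_i\to\c$, the angles $\sigma_i^{\pm}$ agree with the central gaps $\delta\phi_i$ and $\delta\phi_{i-1}$ up to a vanishing error. Projecting the tangential part of~\eqref{equation:newprotocol} onto the tangent of the limiting orbit then gives
\begin{equation*}
\dot{(\delta\phi_i)} = \Bigl[f\bigl(\delta\phi_{i+1}/\delta\phi_i\bigr) - f\bigl(\delta\phi_i/\delta\phi_{i-1}\bigr)\Bigr]\frac{\alpha}{r} + o(1),
\end{equation*}
on the invariant manifold $\sum_i\delta\phi_i = 2\pi$.

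Finally I would linearize about $\delta\phi_i = 2\pi/n$. Writing $\delta\phi_i = 2\pi/n + \epsilon_i$ with $\sum_i \epsilon_i = 0$, the principal part becomes the circulant discrete Laplacian $\dot{\epsilon}_i = \kappa(\epsilon_{i+1}-2\epsilon_i+\epsilon_{i-1})$ with $\kappa = \alpha n f'(1)/(2\pi r)>0$, whose eigenvalues on the zero-sum invariant subspace are $2\kappa(\cos(2\pi k/n)-1)<0$ for $k=1,\dots,n-1$. If $f$ is only continuous and strictly increasing, the same conclusion is reached through the Lyapunov function $V=\tfrac{1}{2}\sum_i \epsilon_i^2$ combined with the sign relation $(a-b)(f(a)-f(b))>0$ for $a\ne b$. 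The main obstacle will be the rigorous handling of the $o(1)$ coupling: the gap equation is driven by estimator and radial transients whose decay is only known from Theorem~\ref{theorem:4}, so a cascade / ISS argument — treating those transients as vanishing exogenous inputs to the circulant angular subsystem — is what bridges the linearized stability and the full coupled dynamics, thereby certifying asymptotic stability of the claimed equilibrium under the standing assumptions together with $M\geq 2r\sin(\pi/n)$.
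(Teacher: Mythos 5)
Your decomposition is the same as the paper's: (i) the radial/localization part is inherited from the single-agent analysis because the tangential gain $f(\sigma_i^+/\sigma_i^-)\alpha$ stays in $(0,\alpha]$; (ii) $M\geq 2r\sin(\pi/n)$ guarantees nonempty neighbor sets near the equilibrium so that $i^+=i+1$, $i^-=i-1$ and $\sigma_i^+=\delta\phi_i$, $\sigma_i^-=\delta\phi_{i-1}$; (iii) the gap dynamics reduce to $\tfrac{\mathrm{d}}{\mathrm{d}t}\delta\phi_i=[f(\delta\phi_{i+1}/\delta\phi_i)-f(\delta\phi_i/\delta\phi_{i-1})]\alpha/r$. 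Where you diverge is the last step: the paper closes the argument with an extremal (max/min) monotonicity argument — the average of the largest gaps strictly decreases and the average of the smallest gaps strictly increases whenever they differ — whereas you linearize to a circulant Laplacian with gain $\alpha n f'(1)/(2\pi r)$ and fall back on the quadratic Lyapunov function $V=\tfrac12\sum_i\epsilon_i^2$. Both routes work; the paper's max/min argument has the advantage of needing only continuity and strict monotonicity of $f$ (your primary linearization requires $f'(1)$ to exist, which the hypotheses do not grant, so your fallback is actually the essential case), while your route, once the fallback is made precise, gives a quantitative decay estimate. Two remarks on your write-up: the sign relation $(a-b)(f(a)-f(b))>0$ does not apply verbatim because the arguments of $f$ are the ratios $\delta\phi_{i+1}/\delta\phi_i$, not the deviations $\epsilon_i$; you need an Abel summation and centering at $f(1)$, i.e.\ $\dot V=\tfrac{\alpha}{r}\sum_i[f(\delta\phi_{i+1}/\delta\phi_i)-f(1)](\epsilon_i-\epsilon_{i+1})\leq 0$, using that $\delta\phi_{i+1}/\delta\phi_i>1$ iff $\epsilon_{i+1}>\epsilon_i$. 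Also, your explicit cascade/ISS handling of the $o(1)$ transients from the estimator and radial errors is a point where you are more careful than the paper, which simply evaluates the gap dynamics on the equilibrium manifold.
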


\begin{proof}
  Based on the analysis in Section~\ref{section:algorithm}, because $f(\sigma^{+}_{i}/\sigma^{-}_{i})\alpha \in (0,\alpha]$, then for all $i$, $\hat{\rho}^{\hat{\c}_{i}}_{i}\rightarrow r$ and $\hat{\c}_{i}=\c$ as $t\rightarrow \infty.$ The security of the agents is also guaranteed.
  The condition $M\geq2r\sin(\pi/n)$ guarantees that neither $\mathcal{N}^{+}_{i}$ nor $\mathcal{N}^{-}_{i}$ is empty around the equilibrium point.
  Because the minimum circle is strictly convex, it can be easily proved that $i^{+}=i+1$ and $i^{-}=i-1$ around the equilibrium point.
  Therefore, when
  $\hat{\rho}^{\hat{\c}_{i}}_{i} = r$ and $\hat{\c}_{i}=\c$ for all $i$, we have(see \figurename~\ref{figure:Angle.}) $\sigma^{+}_{i}=\delta\phi_{i}$ and $\sigma^{-}_{i}=\delta\phi_{i-1}$.
  With the new control protocol~\eqref{equation:newprotocol}, we know that the tangential velocity of each agent is $f(\sigma^{+}_{i}/\sigma^{-}_{i})\alpha$. Then, it is obtained that $\dot{\theta}_{i}=\dot{\phi}_{i}=f(\sigma^{+}_{i}/\sigma^{-}_{i})\alpha/r,$ resulting that $\frac{\mathrm{d}}{\mathrm{d} t} \delta\phi_{i}=[f(\delta\phi_{i+1}/\delta\phi_{i})-f(\delta\phi_{i}/\delta\phi_{i-1})]\alpha/r$.
  Define $\overline{\T}=\{j\mid\delta\phi_{j}=\max\delta\phi_{i}\}$ and $\underline{\T}=\{j\mid\delta\phi_{j}=\min\delta\phi_{i}\}$. When $\overline{\T}\neq\underline{\T}$, according to the fact that $f(x)$ is strictly increasing continuous function, we have
    $\frac{\mathrm{d}}{\mathrm{d} t}\frac{\Sigma_{j\in\overline{\T}} \delta\phi_{j}}{|\overline{\T}|}<0$, and
    $\frac{\mathrm{d}}{\mathrm{d} t}\frac{\Sigma_{j\in\underline{\T}} \delta\phi_{j}}{|\underline{\T}|}>0$.
  So we conclude that the equilibrium point defined by $\{\hat{\rho}^{\hat{\c}_{i}}_{i}=r, \hat{\c}_{i}=\c, \delta\phi_{i}=2\pi/n,\forall i\}$ is asymptotically stable.
\end{proof}

\begin{figure}[!htb]
  \centering
  \includegraphics[width=0.4\linewidth]{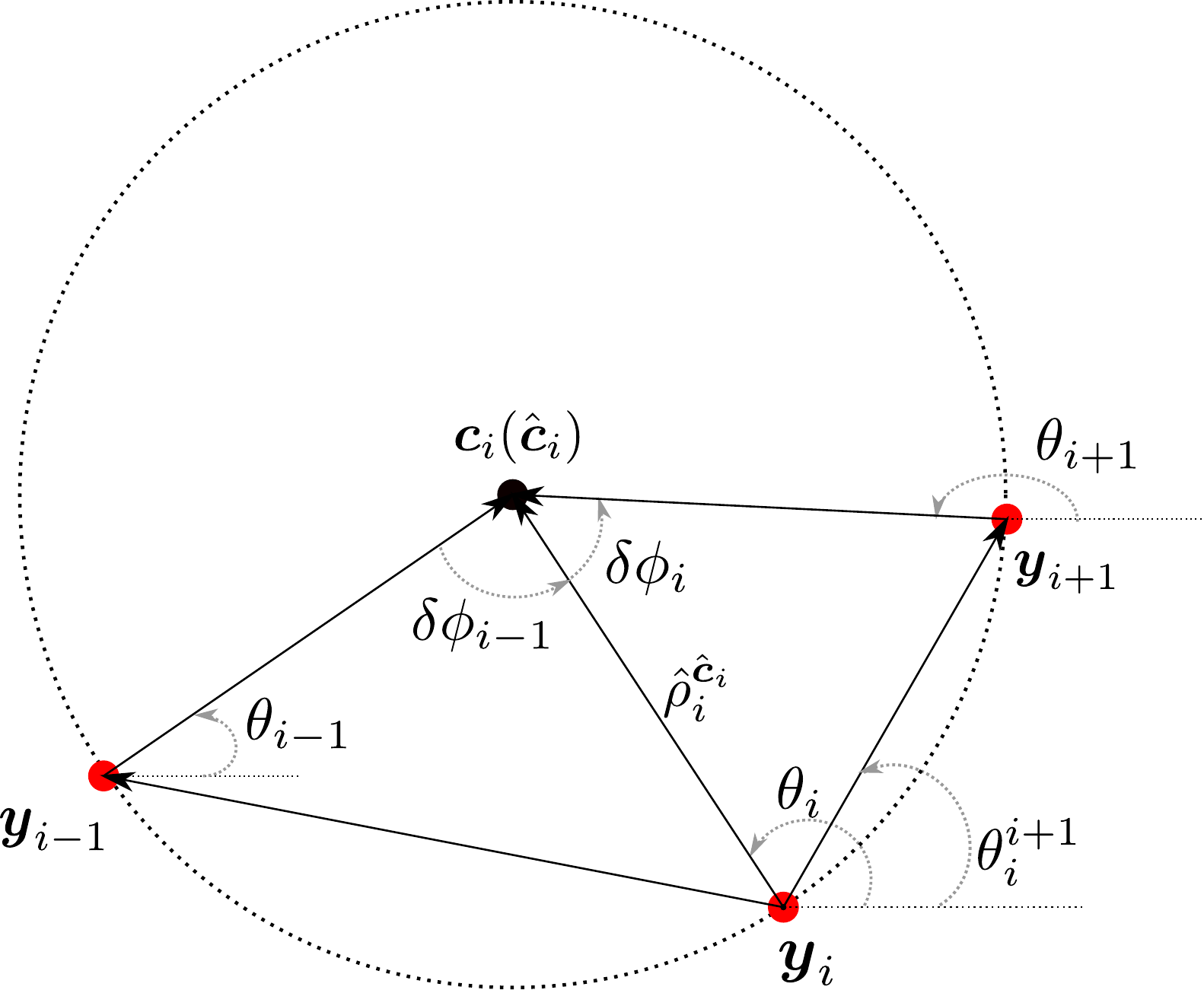}
   \caption{Relationship among variables when $\hat{\rho}^{\hat{\c}_{i}}_{i} = r$ and $\hat{\c}_{i}=\c$.}
   \label{figure:Angle.}
\end{figure}

\begin{remark}
  For the multiple agents' case, it is required that the agents not
only converge to the different desired circle orbit but also maintain an
equiangular spaced formation around the targets. Similarly, we just need to multiply the tangential velocity term in~\eqref{equation:newprotocol} by their respective desired radius.
\end{remark}
\section{Simulation}\label{section:simulation}
In this section, a series of simulation results for the safe minimum circle circumnavigation algorithm of four targets are presented. \par
\medskip
  \emph{Case 1.} For the case of an agent,
we set the coefficient $k=5$, the tangential speed $\alpha=5$, the safe distance $r_{\s}=0.3$, the desired distance $d=0.4$ and $h=0.1$. The initial values of the estimates are all set as $\hat{\rho}_{i}(0)=0.4$ $(i=1, 2, 3, 4)$. The positions of targets are $\x_{1}=[-2,0]^{\mathrm{T}},$ $\x_{2}=[4,5]^{\mathrm{T}},$ $\x_{3}=[2,0]^{\mathrm{T}},$ $\x_{4}=[1,1]^{\mathrm{T}}.$
The initial position of the agent is $\y(0)=[8,0]^{\mathrm{T}}$.\par
\figurename{}s~\ref{figure:The trajectories of the agent and the estimated targets.}--\ref{figure:Estimation errors of the targets' positions.} show that the proposed dynamic compensators and control protocol can achieve localization and safe minimum circle circumnavigation for multiple targets. \figurename~\ref{figure:The trajectories of the agent and the estimated targets.} describes the trajectories of the agent and the dynamic compensator outputs in 2D space.
It is shown in the figure that the agent circumnavigates the minimum circle of the targets with the desired enclosing distance.
\figurename~\ref{figure:Distance between the agent and the estimated minimum circle} describes the distance between the agent and the real minimum circle, which
shows that the distance eventually converges to the desired enclosing distance $d$.  \figurename~\ref{figure:Tangent velocity of agent} describes the image of $\bar{\bm{\varphi}}^{\mathrm{T}}(t)\dot{\y}(t)$, which means that the tangential velocity of the agent enclosing the minimum circle will eventually converge to the desired velocity $\alpha$.
 \figurename~\ref{figure:Estimation errors of the targets' positions.} depicts the estimation error $\tilde{\rho}_{i}(t)$ and shows that all the errors converge to 0, which means that the dynamic compensator outputs will converge to the real target's positions.

To sum up,  it can be concluded that there is no collision in the whole circumnavigation process, and the dynamic compensators and control protocol eventually converge. Therefore, the safe minimum circle circumnavigation is successfully performed by the proposed algorithm.

\begin{figure}[!htb]
  \centering
  \includegraphics[width=0.78\linewidth]{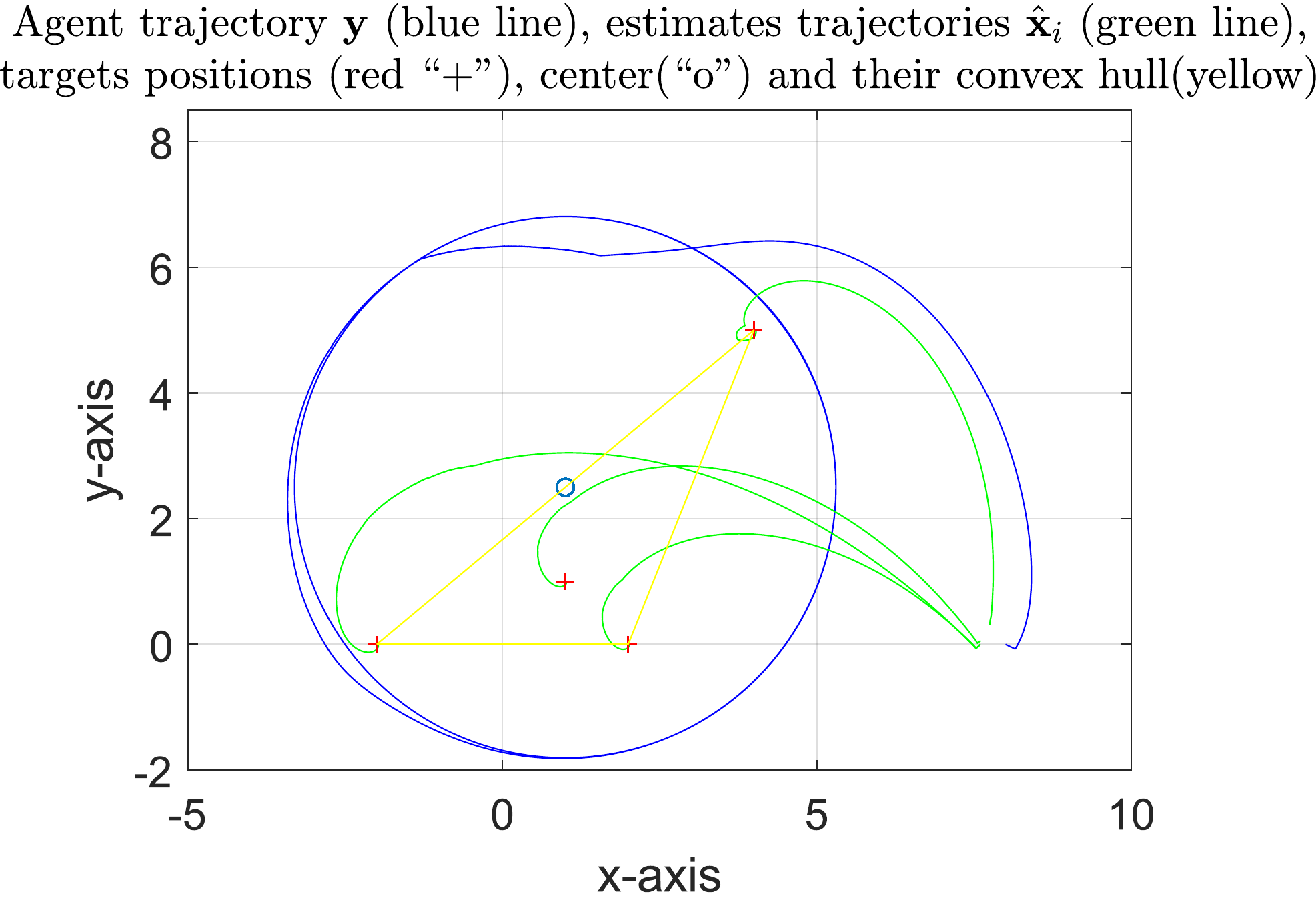}
   \caption{The trajectories of the agent and the dynamic compensator outputs.}
   \label{figure:The trajectories of the agent and the estimated targets.}
\end{figure}

\begin{figure}[!htbp]
  \centering
  \begin{minipage}[]{0.45\linewidth}
  \centering
  \includegraphics[width=1\linewidth]{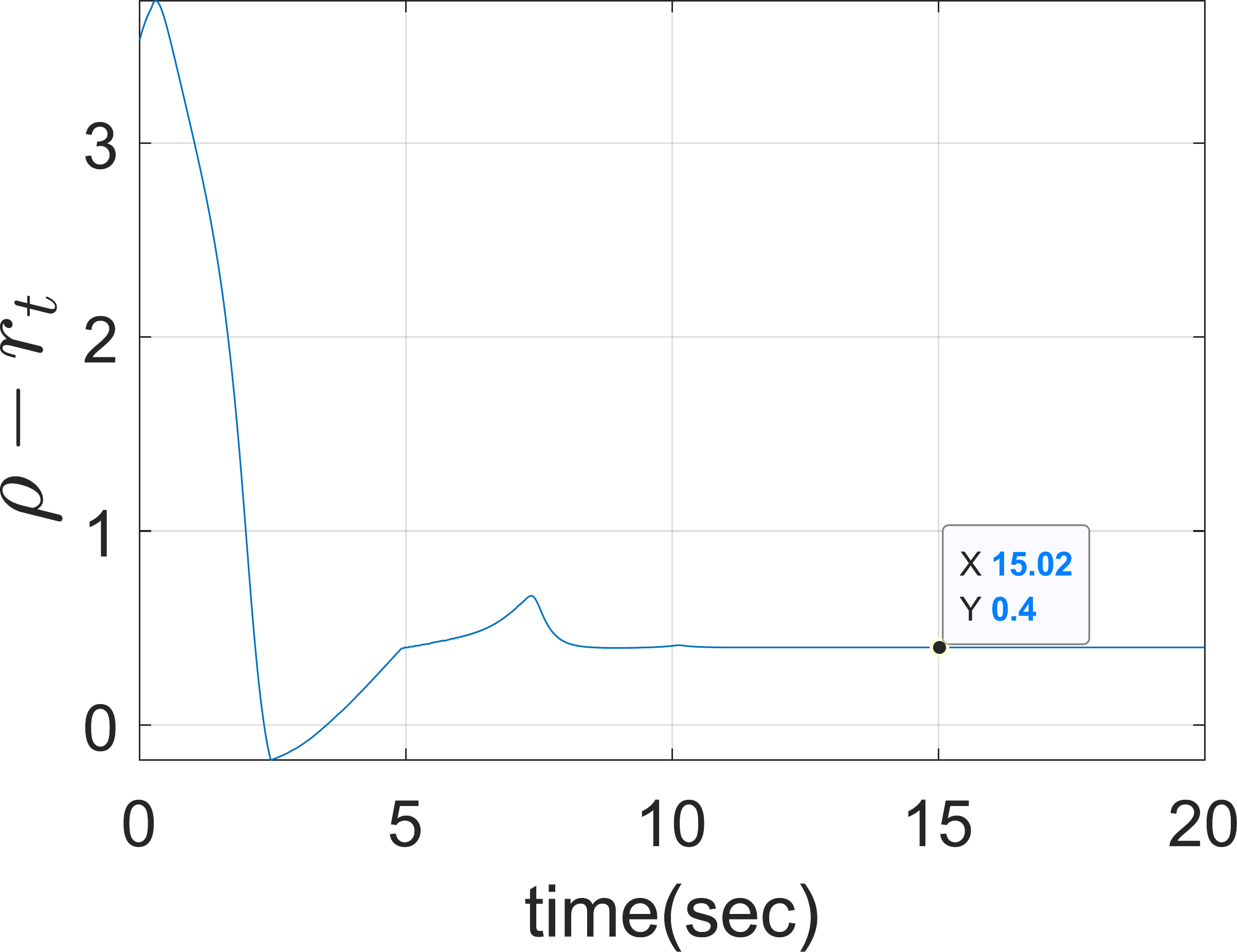}
  \caption{Distance between the agent and the real minimum circle.}
  \label{figure:Distance between the agent and the estimated minimum circle}
  \end{minipage}
  \hfill
  \begin{minipage}[]{0.45\linewidth}
  \centering
  \includegraphics[width=1\linewidth]{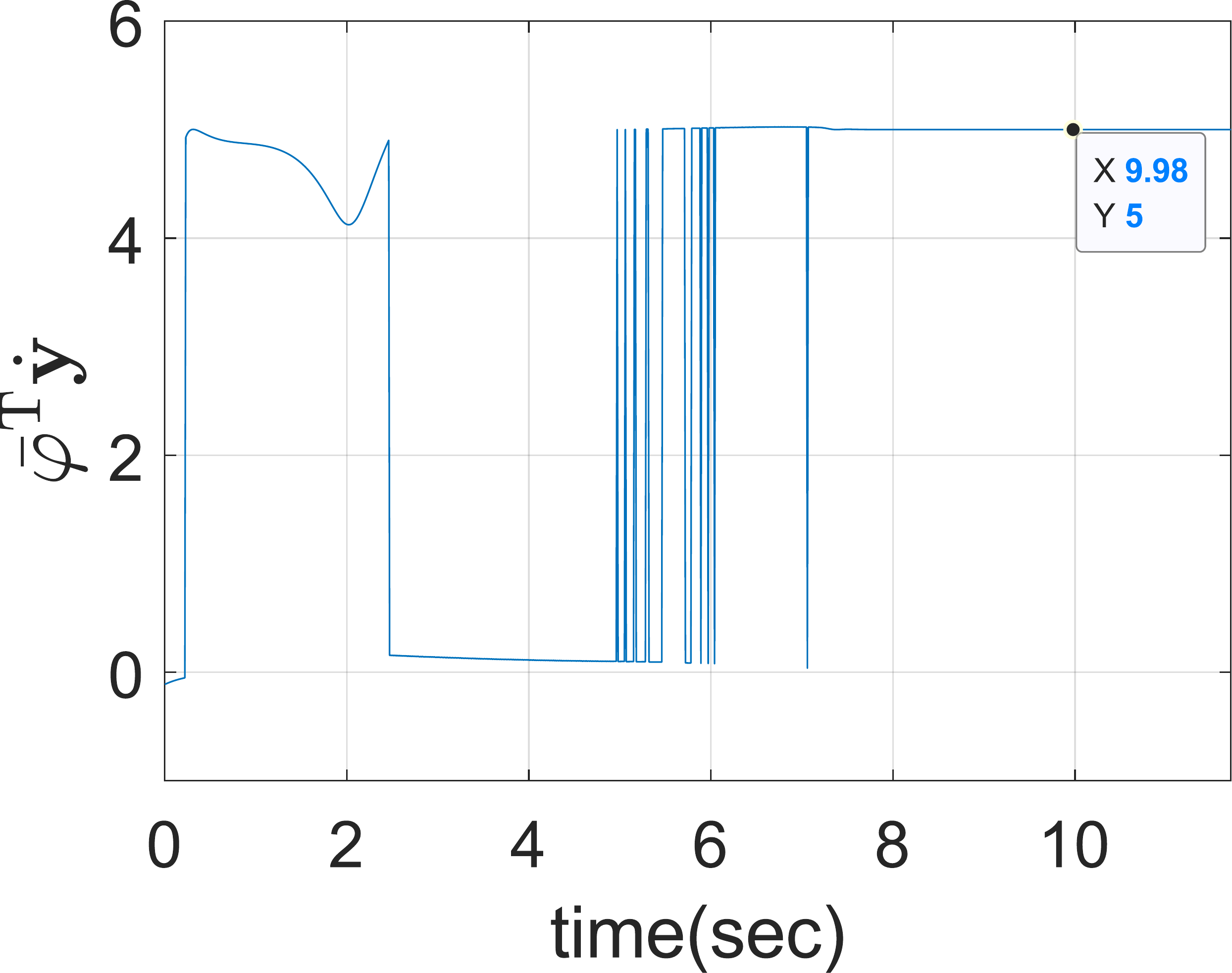}
  \caption{The tangential velocity of the agent.}
  \label{figure:Tangent velocity of agent}
  \end{minipage}
\end{figure}

\begin{figure}[!htb]
  \centering
  \includegraphics[width=0.6\linewidth]{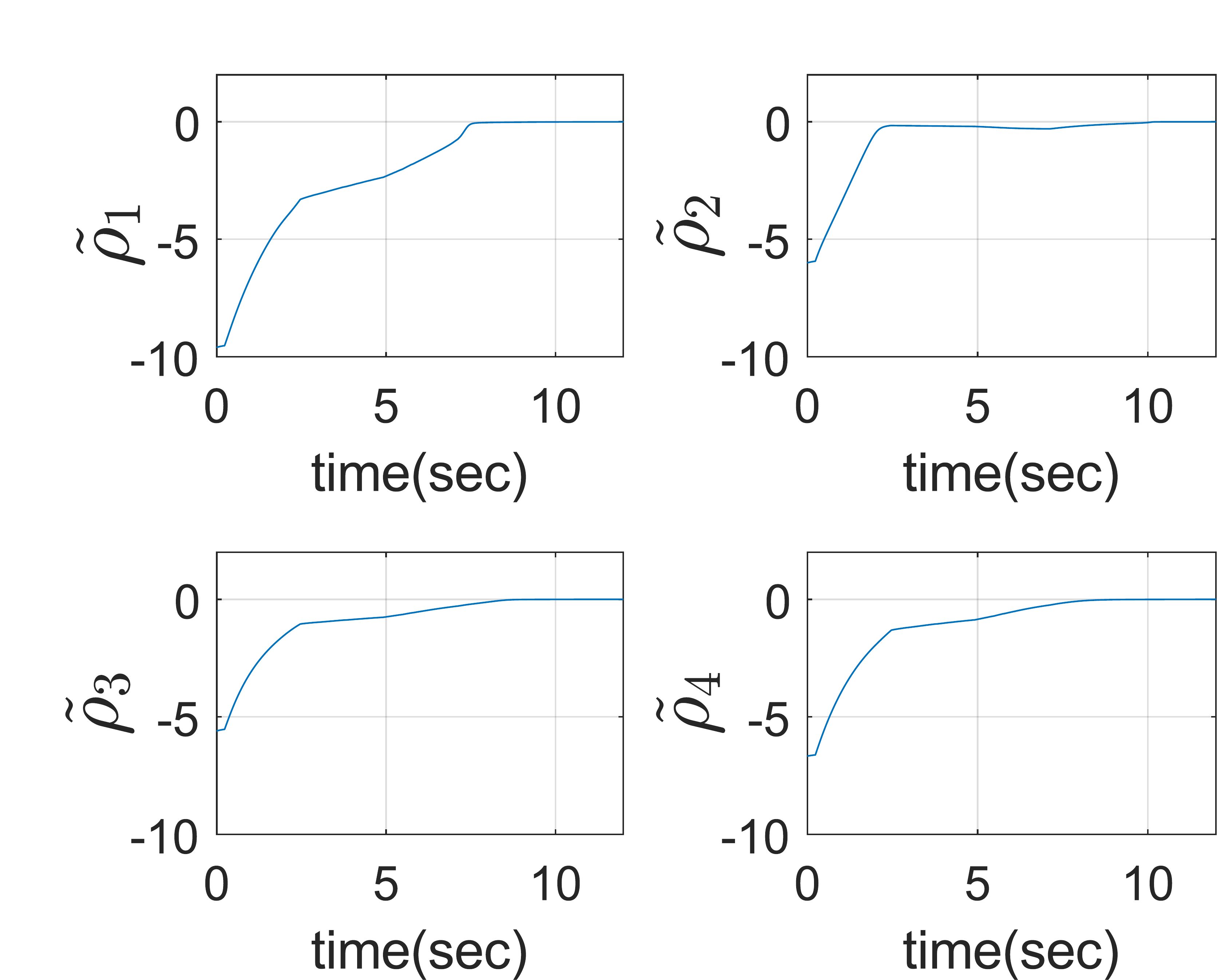}
  \caption{Estimation errors of the targets' positions.}
  \label{figure:Estimation errors of the targets' positions.}
\end{figure}

\medskip
  \emph{Case 2.} For the case of multiple agents, four agents are considered in this simulation. The initial positions of the four agents are $\y_1(0)=[8,0]^{\mathrm{T}}$, $\y_2(0)=[0,8]^{\mathrm{T}}$, $\y_3(0)=[-8,0]^{\mathrm{T}}$, $\y_4(0)=[0,-5]^{\mathrm{T}}$ respectively.
  Other parameter settings are shown in case 1.
  Trajectories of the agents are shown in \figurename~\ref{figure:agent}, which shows that multiple agents safely circumnavigate the minimum circle of multiple targets with the desired enclosing distance.
  The angular configuration is demonstrated by evolution of $\delta\phi_{i}$ for $i=1,2,3,4$.
  It can be seen from \figurename~\ref{figure:angle} that each $\delta\phi_{i}$ converges to $\pi/2$.
  One can observe that all the four agents converge to the desired minimum circle of the real targets and achieve even distribution over the minimum circle as expected.
  Meanwhile, the security of the agents is also guaranteed.

\begin{figure}[!htbp]
  \centerline{
  \subfigure[Trajectories of agents.]{
    \centering
    \includegraphics[height=4.1cm]{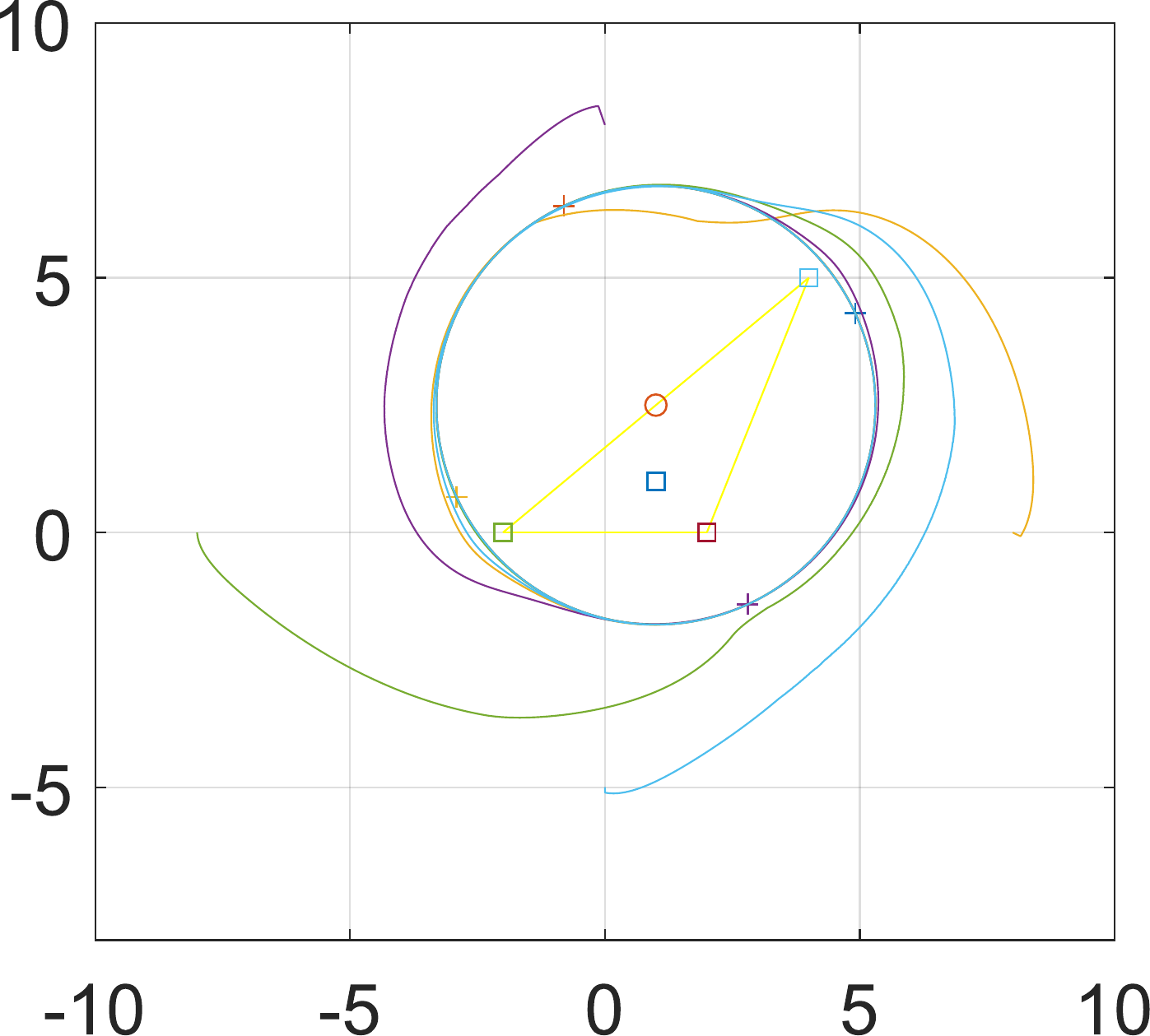}
    \label{figure:agent}
  }%
  \hspace{0.3cm}
  \subfigure[angle.]{
    \centering
    \includegraphics[height=4.1cm]{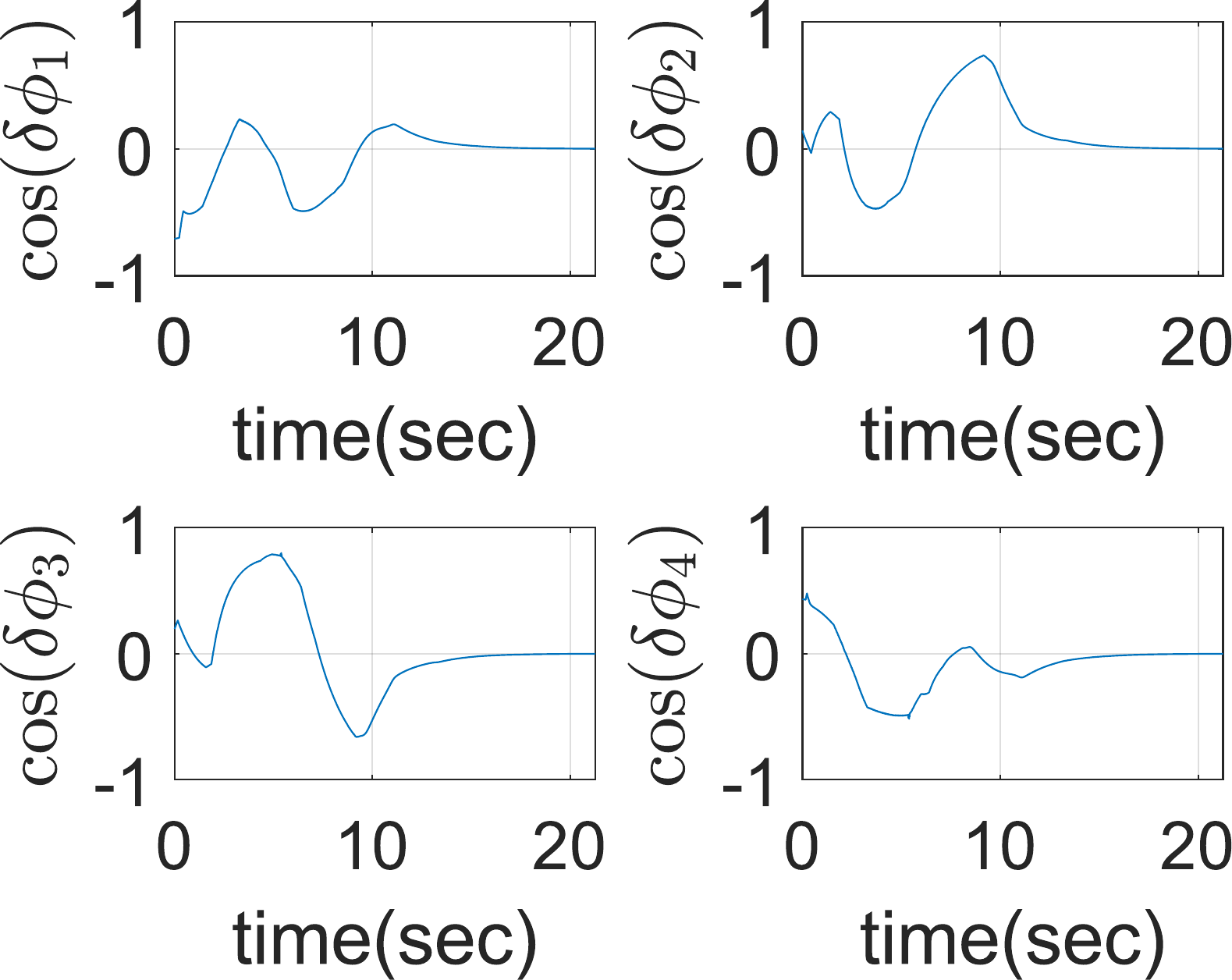}
    \label{figure:angle}
  } }
\caption{Entrapment of multiple targets by multiple agents in a minimum circle.}
\end{figure}

\medskip
  \emph{Case 3.} We further consider the case that multiple agents distributed in different radius with equiangular spaced formation. It can be seen in \figurename~\ref{figure:agent1} and \figurename~\ref{figure:angle1} that the four agents still converge to the desired minimum circle of the real targets and achieve even distribution over the different minimum circles as expected. Meanwhile, the security of the agents is also guaranteed.

\begin{figure}[!htb]
  \centerline{
  \subfigure[Trajectories of agents.]{
    \centering
    \includegraphics[height=4.1cm]{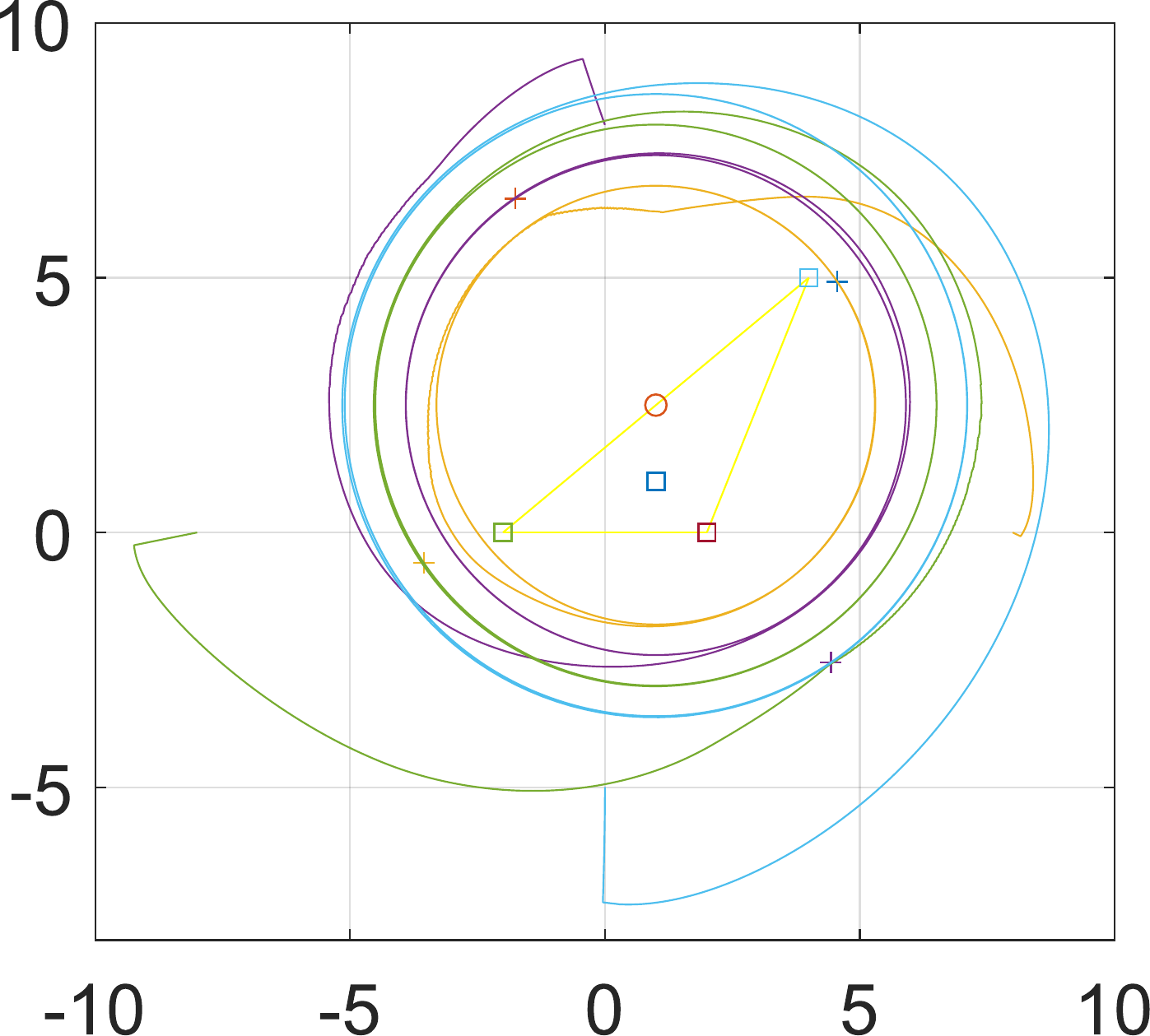}
    \label{figure:agent1}
  }%
  \hspace{0.3cm}
  \subfigure[angle.]{
    \centering
    \includegraphics[height=4.1cm]{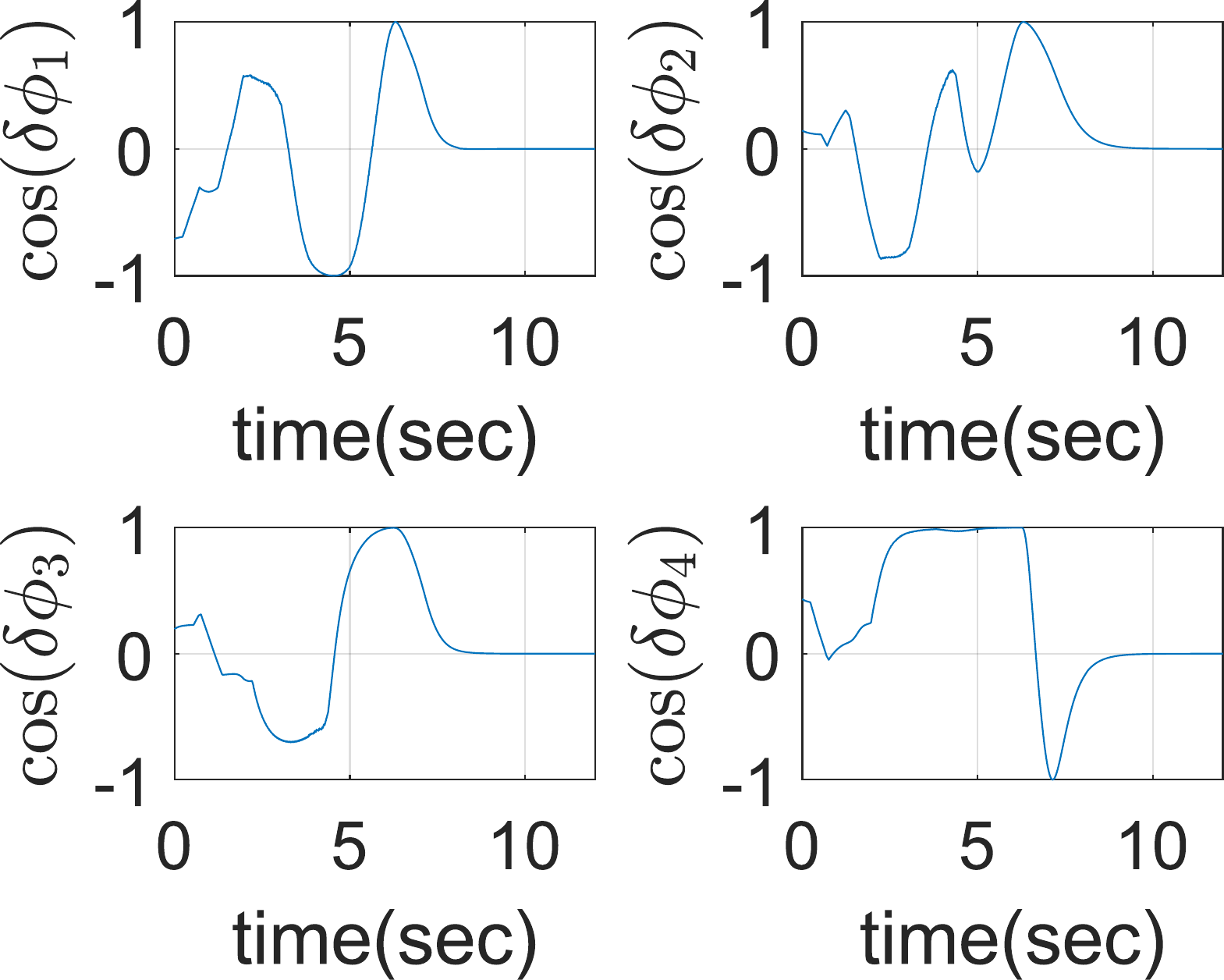}
    \label{figure:angle1}
  } }
\caption{Entrapment of multiple targets by multiple agents in different minimum circle.}
\end{figure}

\section{Conclusion and future works}\label{section:conclusion}

This paper studies safe minimum circle circumnavigation for a group of stationary targets based on a dynamic output feedback approach. A new circumnavigation algorithm is developed with the dynamic compensators and the control protocol keeping the radial motion far away from the dangerous area when the agent moving in a dangerous area. By using the proposed algorithm, the agent encloses the targets closely and does not collide with the multiple targets. Our results show that (\romannumeral1) the design of the control protocol facilitates the design of control law; (\romannumeral2) the prior information of the targets distribution range is not required in the design of the circumnavigation algorithm; and (\romannumeral3) this algorithm still work even in the multi-agent scenario.
Future investigations include improving the algorithm to fit in three-dimensional space scenarios and considering finite-time minimum circle circumnavigation problem.

\section*{Acknowledgements}
This work is supported in part by the National Natural Science Foundation of China under grant (No. 61973055) and the Fundamental Research Funds for the Central Universities (No. ZYGX2019J062).

\section*{References}


\begin{thebibliography}{}
\expandafter\ifx\csname url\endcsname\relax
  \def\url#1{\texttt{#1}}\fi
\expandafter\ifx\csname urlprefix\endcsname\relax\def\urlprefix{URL }\fi
\expandafter\ifx\csname href\endcsname\relax
  \def\href#1#2{#2} \def\path#1{#1}\fi

\end{thebibliography}


\begin{thebibliography}{}
\bibitem{1}
  E. Shahamatkhah, M. Tabatabaei, Containment control of linear discrete-time fractional-order multi-agent systems with time-delays, Neurocomputing 385 (2020) 42-47. \DOI{10.1016/j.neucom.2019.12.067}.
\bibitem{2}
  R.W. Liao, L. Han, X.W. Dong, Q.D. Li, Z. Ren, Finite-time formation-containment tracking for second-order multi-agent systems with a virtual leader of fully unknown input, Neurocomputing 415 (2020) 234-246. \DOI{10.1016/j.neucom.2020.07.067}.
\bibitem{3}
  Y.S. Zheng, J.Y. Ma, L. Wang, Consensus of hybrid multi-agent systems, IEEE T. Neur. Net. Lear 29 (4) (2018) 1359-1365. \DOI{10.1109/TNNLS.2017.2651402}.
\bibitem{4}
  X.H. Chen, Y.M. Jia, Adaptive leader-follower formation control of non-holonomic mobile robots using active vision, IET Control Theory Appl. 9 (8) (2015) 1302-1311. \DOI{10.1049/iet-cta.2014.0019}.
\bibitem{5}
  Y.J. Shi, R. Li, K.L. Teo, Cooperative enclosing control for multiple moving targets by a group of agents, Int. J. Control 88 (1) (2015) 80-89. \DOI{10.1080/00207179.2014.938447}.
\bibitem{6}
  A. Se, S.R. Sahoo, M. Kothari, Circumnavigation on multiple circles around a nonstationary target with desired angular spacing, IEEE Trans. Cybern 51 (1) (2021) 222-232. \DOI{10.1109/TCYB.2019.2935839}.
\bibitem{7}
  Z. Miao, Y. Wang, R. Fierro, Cooperative circumnavigation of a moving target with multiple nonholonomic robots using backstepping design, Syst. Control Lett. 103 (2017) 58-65. \DOI{10.1016/j.sysconle.2017.03.004}.
\bibitem{8}
  Y.Y. Chen, Y. Zhang, C.L. Liu, Q. Wang, Formation circumnavigation for unmanned aerial vehicles using relative measurements with an uncertain dynamic target, Nonlinear Dynamics 97 (4) (2019) 2305-2321. \DOI{10.1007/s11071-019-05126-y}.
\bibitem{9}
  I. Shames, B. Fidan, B.D.O. Anderson, Close target reconnaissance with guaranteed collision avoidance, Int. J. Robust Nonlinear Control 21 (16) (2011) 1823-1840. \DOI{10.1002/rnc.1663}.
\bibitem{10}
  A. Franchi, P. Stegagno, G. Oriolo, Decentralized multi-robot encirclement of a 3D target with guaranteed collision avoidance, Autonom. Robots 40 (2) (2016) 245-265. \DOI{10.1007/s10514-015-9450-3}.
\bibitem{11}
  A.S. Matveev, A.A. Semakova, A.V. Savkin, Range-only based circumnavigation of a group of moving targets by a non-holonomic mobile robot, Automatica 65 (2016) 76-89. \DOI{10.1016/j.automatica.2015.11.032}.
\bibitem{12}
  A.S. Matveev, A.A. Semakova, Range-Only-Based three-dimensional circumnavigation of multiple moving targets by a nonholonomic mobile robot, IEEE Trans. Autom. Control 63 (7) (2018) 2032-2045. \DOI{10.1109/TAC.2017.2758843}.
\bibitem{13}
  A.S. Matveev, A.A. Semakova, Localization and three-dimensional circumnavigation of many mobile targets based on distance measurements, IFAC-PapersOnLine 50 (1) (2017) 8832-8837. \DOI{10.1016/j.ifacol.2017.08.1538}.
\bibitem{14}
  F. Dong, K.Y. You, S.J. Song, Target encirclement with any smooth pattern using range-based measurements, Automatica 116 (2020) 108932. \DOI{10.1016/j.automatica.2020.108932}.
\bibitem{15}
  Y. Cao, UAV circumnavigating an unknown target using range measurement and estimated range rate, in: Proceedings of the American Control Conference, Portland, Oregon, USA, 2014, pp. 4581-4586. \DOI{10.1109/ACC.2014.6858806}.
\bibitem{16}
  Y. Cao, UAV circumnavigating an unknown target under a GPS-denied environment with rang-only measurements, Auomatica 55 (2015) 150-158. \DOI{10.1016/j.automatica.2015.03.007}.
\bibitem{17}
  M. Deghat, I. Shame, B.D.O. Anderson, C.B. Yu, Target localization and circumnavigation using bearing measurement in 2D, in: Proceedings of the 49th IEEE Conference on Decision and Control, Hilton Atlanta Hotel, Atlanta, GA, USA, 2010, pp. 334–339. \DOI{10.1109/CDC.2010.5717795}.
\bibitem{18}
  J.P. Shao, Y.P. Tian, Multi-target localisation and circumnavigation by a multi-agent system with bearing measurements in 2D space, Int. J. Syst. Sci. 49 (1) (2018) 15-26. \DOI{10.1080/00207721.2017.1397803}.
\bibitem{19}
  S.H. Chun, Y.P. Tian, Multi-targets localization and elliptical circumnavigation by multi-agents using bearing-only measurements in two-dimensional space, Int. J. Robust Nonlinear Control 30 (8) (2020) 3250-3268. \DOI{10.1002/rnc.4932}.
\bibitem{20}
  R. Li, Y.J. Shi, Q.X. Wu, Y.D. Song, Safe-circumnavigation of one single agent around a group of targets with only bearing information, J. Frankl. Inst. 356 (18) (2019) 11546-11560. \DOI{10.1016/j.jfranklin.2019.03.040}.
\bibitem{21}
  S.D. Cao, R. Li, Y.J. Shi, Y.D. Song, A low-cost estimator for target localization and circumnavigation using bearing measurements, J. Frankl. Inst. 357 (14) (2020) 9654-9672. \DOI{10.1016/j.jfranklin.2020.07.031}.
\bibitem{22}
  S.D. Cao, R. Li, Y.J. Shi, Y.D. Song, Safe convex-circumnavigaion of one agent around multiple targets using bearing-only measurements, Accepted by Automatica. \DOI{10.1080/00207179.2014.938447}.
\bibitem{23}
  R. Li, Y.J. Shi, Y.D. Song, Localization and circumnavigation of multiple agents along an unknown target based on bearing-only measurement: A three dimentional solution, Automatica 94 (2018) 18-25. \DOI{10.1016/j.automatica.2018.04.005}.
\bibitem{24}
  R.H. Zheng, Y.H. Liu, D. Sun, Enclosing a target by nonholonomic mobile robots with bearing-only measurements, Automatica 53 (2015) 400-407. \DOI{10.1016/j.automatica.2015.01.014}.
\bibitem{25}
  Z.W. Yang, S.Y. Zhu, C.L. Chen, G. Feng, X.P. Guan, Entrapping a target in an arbitrarily shaped orbit by a single robot using bearing measurements, Automatica 113 (2020) 108805. \DOI{10.1016/j.automatica.2020.108805}.
\bibitem{26}
 J.-B. Hiriart-Urruty, C. Lemar\'{e}chal, Convex Analysis and Minimization Algorithms I Fundamentals, Springer-Verlag (1996).
\bibitem{27}
 G. Chrystal, On the problem to construct the minimum circle enclosing n given points in a plane, P. Edinburgh. Math. Soc. (3) (1885) 30-33. \DOI{10.1017/S0013091500037238}.
\end{thebibliography}
\end{document}